\begin{document}
\theoremstyle{plain}
\newtheorem*{ithm}{Theorem}
\newtheorem*{idefn}{Definition}
\newtheorem{thm}{Theorem}[section]
\newtheorem{lem}[thm]{Lemma}
\newtheorem{dlem}[thm]{Lemma/Definition}
\newtheorem{prop}[thm]{Proposition}
\newtheorem{set}[thm]{Setting}
\newtheorem{cor}[thm]{Corollary}
\newtheorem*{icor}{Corollary}
\theoremstyle{definition}
\newtheorem{assum}[thm]{Assumption}
\newtheorem{notation}[thm]{Notation}
\newtheorem{defn}[thm]{Definition}
\newtheorem{clm}[thm]{Claim}
\newtheorem{ex}[thm]{Example}
\theoremstyle{remark}
\newtheorem{rem}[thm]{Remark}
\numberwithin{equation}{section}

%
%
%
\newcommand{\id}{\mathop{\mathrm{id}}\nolimits}
\newcommand{\Mat}{\mathop{\mathrm{M}}\nolimits}
\newcommand{\Tr}{\mathop{\mathrm{Tr}}\nolimits}
\newcommand{\exa}{\mathop{\mathrm{ex}}\nolimits}
\newcommand{\lmk}{\left (}
\newcommand{\rmk}{\right )}
\newcommand{\ctv}{\caC_{(\theta,\varphi)}}
\newcommand{\btv}{\caB_{(\theta,\varphi)}}
\newcommand{\bbC}{{\mathbb C}}
\newcommand{\bbR}{{\mathbb R}}
\newcommand{\bbN}{{\mathbb N}}
\newcommand{\bbT}{{\mathbb T}}
\newcommand{\bbZ}{{\mathbb Z}}
\newcommand{\caA}{{\mathcal A}}
\newcommand{\caB}{{\mathcal B}}
\newcommand{\caC}{{\mathcal C}}
\newcommand{\caD}{{\mathcal D}}
\newcommand{\caG}{{\mathcal G}}
\newcommand{\caH}{{\mathcal H}}
\newcommand{\caM}{{\mathcal M}}
\newcommand{\caO}{{\mathcal O}}
\newcommand{\caP}{{\mathcal P}}
\newcommand{\caU}{{\mathcal U}}
\newcommand{\QAut}{\mathop{\mathrm{QAut}}\nolimits}
\newcommand{\Aut}{\mathop{\mathrm{Aut}}\nolimits}
\newcommand{\Ad}{\mathop{\mathrm{Ad}}\nolimits}
\newcommand{\unit}{\mathbb I}
\newcommand{\lv}{\left \vert}
\newcommand{\rv}{\right \vert}
\newcommand{\lV}{\left \Vert}
\newcommand{\rV}{\right \Vert}
\newcommand{\braket}[2]{\left\langle#1,#2\right\rangle}
\newcommand{\Uo}{{\mathrm U}(1)}


\title{Classification of gapped ground state phases in quantum spin systems}

\author{Yoshiko Ogata \thanks{ Graduate School of Mathematical Sciences
The University of Tokyo, Komaba, Tokyo, 153-8914, Japan
Supported in part by
the Grants-in-Aid for
Scientific Research, JSPS.}}



\maketitle
\begin{abstract}
Recently, classification problems of gapped ground state phases attract a lot of attention
in quantum statistical mechanics.
We explain about our operator algebraic approach to these problems.
\end{abstract}

\theoremstyle{plain}
\section{Introduction}
In quantum mechanics, physical models are determined in terms of some self-adjoint operators
called Hamiltonians.
Recently, Hamiltonians
whose spectrum has a gap between the lowest eigenvalue (which coincides with the infimum of the spectrum) 
and the rest of the spectrum attract a lot of attention.
Physically, these models are considered to be in normal phases, where no critical phenomena
occur.
Despite that, it has turned out that the structure of these normal gapped phases
is actually mathematically interesting when we introduce some equivalence relation to them.
Roughly speaking, we say two models are equivalent if we can connect them smoothly within
those normal phases.
In spacial dimensions higher than one, it is believed (and partially proven)
there are multiple phases with respect to such classifications.
If we further introduce some symmetry to the game, we obtain interesting mathematical structures, even in one dimension.
In this talk, we explain the operator algebraic approach to those problems.

\section{Finite-dimensional quantum mechanics}
In order to motivate us for the operator algebraic framework of quantum statistical mechanics,
we first recall finite-dimensional quantum mechanics in this section.
In finite-dimensional quantum mechanics,
physical observables are represented by
elements of $\Mat_n$, the algebra of $n\times n$-matrices.
Each positive matrix $\rho$ with $\Tr\rho=1$ (called a density matrix) defines 
a physical state by
\[
\omega_\rho: \Mat_n\ni A\mapsto \Tr\lmk \rho A\rmk\in\bbC.
\]
We call this map $\omega_\rho$ a state.
Clearly, it is positive i.e., $\omega_{\rho}\lmk A^{*}A\rmk\ge 0$ and normalized $\omega_{\rho}\lmk\unit\rmk=1$.
This corresponds to the procedure of taking
{expectation values} of each {physical observables} $A\in\Mat_n$, in the 
{physical state {$\omega_\rho$}}.
Note that the set of all states forms a convex compact set.
Its extremal points are called pure states.
A state $\omega_\rho$ is pure if and only if $\rho$ is a rank one projection.

{Time evolution} (Heisenberg dynamics)
is given by {a self-adjoint matrix $H$},
called a {Hamiltonian},
via the formula 
\begin{align}\label{fst}
\Mat_n\ni A\mapsto \tau_t(A):=e^{it{H}} A e^{-it{H}},
\quad t\in\bbR.
\end{align}
Let $p$ be the spectral projection of $H$ corresponding to the 
{lowest eigenvalue}.
A state ${\omega_{\rho}}(A):=\Tr{\rho} A$ on $\Mat_n$ is said to be {a ground state} of {$H$}
if the support of $\rho$ is under $p$.
The ground state is {unique} if and only if
$ p$ is a rank one projection, i.e., if the {lowest eigenvalue} of {$H$}
is {non-degenerated}.
In this case, the {unique ground state} is of the form $\omega_{ p}(A):=\Tr{p} A$,
and it is {pure} because $p$ is rank one.

Sometimes, we consider time-dependent Hamiltonians $H(t)$.
Then the time evolution of an observable $A\in\Mat_n$
is given by a solution $\tau_t(A)$ of the differential equation
\begin{align*}
\frac{d}{dt}\tau_t(A)=i\left[ H(t), \tau_t(A)\right],\quad \tau_0(A)=A,\quad A\in \Mat_{n}.
\end{align*}
When the Hamiltonian is time-dependent $H(t)=H$, this reduces to the above Heisenberg dynamics
$e^{itH} A e^{-itH}$.

Symmetry plays an important role in physics.
Let {$G$} be a finite group and suppose that there is a group action
 {$\beta : G\to \Aut(\Mat_n)$} given by unitaries $V_g$, $g\in G$ 
 \[
 {\beta_g}(A):=\Ad\lmk{ V_g}\rmk\lmk A\rmk,\quad A\in \Mat_n,\quad g\in G.
 \]
 Here and thereafter, $\Aut(\caA)$ for a $*$-algebra $\caA$ denotes the automorphism group of $\caA$.
 If a Hamiltonian $H$ satisfies
$ {\beta_g}({H})={H}$ for all $g\in G$,
we say
$H$ is $\beta$-invariant.
If a {$\beta$-invariant} Hamiltonian $H$ has a 
{unique ground state} $\omega_{p}(A):=\Tr{p} A$,
then this {unique ground state} $\omega_{p}$  is $\beta$-invariant $\omega_{ p}\lmk \beta_g(A)\rmk=\omega_{p}(A)$, $A\in\Mat_n$, because the spectral projection $p$ is $\beta$-invariant,
i.e., $\beta_g(p)=p$.

\section{Quantum spin systems}\label{quantumspinsec}
Operator algebraic framework of quantum statistical mechanics allows us to extend the framework of finite-dimensional quantum mechanical systems to infinite dimensions.
Let $2\le d\in \bbN$ and $\nu\in\bbN$ be fixed.
Physically, $\frac{d-1}2$ denotes 
the size of on-site spin (spin quantum number)  and 
$\nu$ denotes the spacial dimension.
We denote by ${\mathfrak S}_{\bbZ^\nu}$, the set of all finite subsets of $\bbZ^{\nu}$.
For each finite subset $\Lambda\in {\mathfrak S}_{\bbZ^\nu}$, 
we associate a finite-dimensional $C^{*}$-algebra
\begin{align*}
\caA_{\Lambda}:=\bigotimes_{\Lambda}\Mat_{d}.
\end{align*}
Here, $\Mat_{d}$ is the algebra of $d\times d$ -matrices.
The $\nu$-dimensional quantum spin system $\caA_{\bbZ^\nu}$ is the 
$C^{*}$-inductive limit of this inductive net, given by the natural inclusion.
 For each infinite subset $\Gamma$, we may define $\caA_\Gamma$
in exactly the same manner.
The $C^*$-algebra $\caA_\Gamma$ can be naturally regarded as a $C^*$-subalgebra
of $\caA_{\bbZ^\nu}$.
We say an element $A$ has support in $\Gamma$ if
it belongs to $\caA_\Gamma$.
If an automorphism $\alpha$ acts trivially on $\caA_{\Gamma^c}$
for some $\Gamma\subset \bbZ^{\nu}$,
we say that $\alpha$ has support in $\Gamma$.
The set of all elements in $\caA_{\bbZ^{\nu}}$ with finite support are called local algebra and denoted by $\caA_{\mathrm loc}$.

A state $\omega$ on $\caA_\Gamma$ is defined to be a linear functional
on $\caA_\Gamma$ with $\omega(\unit)=1$
which is positive in the sense that
$\omega(A^*A)\ge 0$ for any $A\in \caA_\Gamma$.
The map $\caA_\Gamma \ni A\mapsto \omega(A)\in\bbC$
corresponds to the procedure of taking the expectation value of
a physical observable $A$ in our physical state $\omega$.
The set of all states on $\caA_\Gamma$ forms a convex weak$*$-compact
set. Its extremal points are called pure states.
By the Krein-Milman theorem, the set of states is the weak$*$-closure of the convex envelope of pure states.
See \cite{BR1} for more details.

For each state, we can associate a representation of $\caA_\Gamma$
essentially uniquely.
\begin{thm}[GNS representation]
For each state $\omega$ on $\caA_\Gamma$, there exist a
representation $\pi_{\omega}$ of $\caA_\Gamma$ on a Hilbert space $\caH_{\omega}$
and a unit vector $\Omega_{\omega}\in\caH_{\omega}$
such that
\begin{align}\label{gns}
\omega(A)=\braket{\Omega_{\omega}}{\pi_{\omega}(A)\Omega_{\omega}},\quad A\in\caA_\Gamma,\quad\text{and}
\quad
\caH_{\omega}=\overline{\pi_{\omega}(\caA_\Gamma)\Omega_{\omega}}.
\end{align}
Here, $\overline{\cdot}$ denotes the norm closure.
It is {unique up to unitary equivalence}.
\end{thm}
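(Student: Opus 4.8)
The plan is to carry out the Gelfand--Naimark--Segal construction, building $\caH_{\omega}$, $\pi_{\omega}$ and $\Omega_{\omega}$ directly out of the algebra $\caA_\Gamma$, using only that $\omega$ is positive and normalized. First I would regard $\caA_\Gamma$ as a complex vector space and equip it with the sesquilinear form $(A,B)\mapsto \omega(A^*B)$. Positivity of $\omega$ makes this form positive semi-definite, and the Cauchy--Schwarz inequality $\lv\omega(A^*B)\rv^2\le\omega(A^*A)\,\omega(B^*B)$ follows in the standard way by expanding $\omega\lmk(\lambda A+B)^*(\lambda A+B)\rmk\ge 0$ and optimizing over $\lambda\in\bbC$.

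Since the form may be degenerate, I would quotient it out. Set $N_\omega:=\{A\in\caA_\Gamma:\omega(A^*A)=0\}$. Cauchy--Schwarz shows $N_\omega$ is a subspace on which the form vanishes identically, and a short estimate shows it is a \emph{left ideal}: for $A\in N_\omega$ and $C\in\caA_\Gamma$ we have $\omega\lmk(CA)^*(CA)\rmk=\omega\lmk A^*C^*CA\rmk\le\lV C\rV^2\,\omega(A^*A)=0$. The form therefore descends to a genuine inner product $\braket{\cdot}{\cdot}$ on the quotient $\caA_\Gamma/N_\omega$, and I let $\caH_\omega$ be its completion. Writing $[A]$ for the class of $A$, I define the representation by left multiplication, $\pi_\omega(A)[B]:=[AB]$, which is well defined precisely because $N_\omega$ is a left ideal, and I set $\Omega_\omega:=[\unit]$. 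Then $\braket{\Omega_\omega}{\pi_\omega(A)\Omega_\omega}=\omega(\unit^*A)=\omega(A)$, and the cyclicity $\overline{\pi_\omega(\caA_\Gamma)\Omega_\omega}=\caH_\omega$ is immediate since the classes $[A]=\pi_\omega(A)\Omega_\omega$ are dense by construction.

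The step I expect to require the most care is showing that each $\pi_\omega(A)$ is \emph{bounded}, so that it extends from the dense subspace to a bounded operator on all of $\caH_\omega$. The key estimate is
\begin{align*}
\lV\pi_\omega(A)[B]\rV^2=\omega\lmk B^*A^*AB\rmk\le\lV A\rV^2\,\omega(B^*B)=\lV A\rV^2\lV[B]\rV^2.
\end{align*}
This inequality comes from the positivity of $\lV A\rV^2\unit-A^*A$ in the $C^*$-algebra: writing it as $C^*C$, one has $B^*\lmk\lV A\rV^2\unit-A^*A\rmk B=(CB)^*(CB)\ge0$, and applying the positive functional $\omega$ gives the bound. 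That $A\mapsto\pi_\omega(A)$ is a $*$-homomorphism is then verified directly on the dense subspace $\caA_\Gamma/N_\omega$.

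Finally, for uniqueness suppose $(\pi',\caH',\Omega')$ is another triple satisfying \eqref{gns}. I would define $U$ on the dense subspace by $U\pi_\omega(A)\Omega_\omega:=\pi'(A)\Omega'$. The identity $\lV\pi'(A)\Omega'\rV^2=\omega(A^*A)=\lV\pi_\omega(A)\Omega_\omega\rV^2$ shows that $U$ is well defined and isometric, so it extends to an isometry of $\caH_\omega$ into $\caH'$, which is surjective because $\Omega'$ is cyclic for $\pi'$. A direct check shows $U$ intertwines the two representations and sends $\Omega_\omega$ to $\Omega'$, yielding the asserted unitary equivalence.
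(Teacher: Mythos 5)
Your proposal is correct and is exactly the canonical GNS construction: the paper states this theorem without proof, deferring to the standard literature (e.g.\ Bratteli--Robinson \cite{BR1}), and your argument --- quotient by the null left ideal $N_\omega$, completion, left multiplication, boundedness via positivity of $\lV A\rV^2\unit-A^*A$, and uniqueness via the densely defined intertwining isometry --- is precisely the proof given there. No gaps; the only cosmetic omission is the one-line remark that $\lV\Omega_\omega\rV^2=\omega(\unit)=1$, which makes $\Omega_\omega$ a unit vector.
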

The triple $(\caH_\omega,\pi_\omega,\Omega_\omega)$
is called the GNS triple of $\omega$.
We frequently consider the commutant or bicommutant of
$\pi_\omega\lmk\caA_{\Gamma}\rmk$.
For a $*$-algebra $\caM$ acting on a Hilbert space $\caH$, 
we denote by $\caM'$ the set of all elements in $\caB(\caH)$ (the set of all bounded operators on $\caH$)
commuting with every element in $\caM$.
The algebra $\caM'$ is called a commutant of $\caM$, and the commutant of $\caM'$
is called bicommutant and denoted by $\caM''$.

For a pure state $\omega$, it is known that $\pi_{\omega}$ is irreducible (i.e, there is no non-trivial closed subspace
of $\caH_{\omega}$ invariant under $\pi_{\omega}(\caA_{\Gamma})$) and 
$\pi_{\omega}(\caA_{\Gamma})$
is dense in $\caB(\caH_{\omega})$ with respect to the strong operator topology.
This property can be rephrased as $\pi_\omega\lmk\caA_{\Gamma}\rmk''=\caB(\caH_{\omega})$.

Given GNS representations, we can introduce 
some equivalence relation between states.
We say two states $\omega,\varphi$ on $\caA_\Gamma$ are equivalent (denoted $\omega\simeq \varphi$)
if and only if the corresponding GNS representations are
unitarily equivalent. 
For a state $\omega$ and an automorphism $\alpha$ on $\caA_\Gamma$,
if $\omega$ and $\omega\circ\alpha$ are equivalent,
then there is a unitary $u$ on the GNS Hilbert space $\caH_\omega$
 implementing $\alpha$
 in the sense
 \begin{align}\label{implement}
 \Ad\lmk u\rmk\circ\pi_\omega=\pi_\omega\circ\alpha.
 \end{align} 
 This is because $\pi_\omega\circ\alpha$ is a GNS representation of $\omega\circ\alpha$.
In our context of quantum spin systems,
we can see that two states $\omega,\varphi$ are equivalent
  if
they can be approximated by a 
local perturbation of each other.
More precisely, $\omega$ can be approximated 
arbitrarily well in the norm topology of $\caA_{\bbZ^{2}}^{*}$ by states of the form $\varphi\lmk A^*\cdot A\rmk$,
with $A\in\caA_{\mathrm loc}$
 and vice versa.
 Physically, it means $\omega$ and $\varphi$ are macroscopically the same.
 
 There is yet another equivalence relation between states, which is called
 quasi-equivalence. Two states $\omega,\varphi$ are said to be quasi-equivalent if
 there is a $*$-isomorphism $\iota: \pi_{\omega}\lmk \caA_{\Gamma}\rmk''\to \pi_{\varphi}\lmk \caA_{\Gamma}\rmk''$
 such that $\pi_{\varphi}(A)=\iota\circ \pi_{\omega}(A)$, for all $A\in \caA_{\Gamma}$.
 Note that if two states are equivalent, they are quasi-equivalent.
 The converse is not true in general, but if the states are pure, it is true.
 
In the operator algebraic framework of quantum spin systems, physical models are specified with a map
called {interactions}.
An interaction $\Phi$ is a map 
$
\Phi: {\mathfrak S}_{\bbZ^\nu}\to \caA_{\mathrm loc}
$
satisfying
\begin{align*}
\Phi(X) = \Phi(X)^*\in \caA_{X}
\end{align*}
for all $X \in {\mathfrak S}_{\bbZ^\nu}$. 
Physically, this $\Phi(X)$ indicates an interaction term between 
spins inside of $X$.

The easiest type of interaction is an  on-site interaction, satisfying
\begin{align}\label{onsitedef}
\Phi(X)=0,\quad\text{if}\quad |X|\neq 1.
\end{align}
It means the only possibly non-zero interaction terms are of the form
$\Phi(\{\mathbf x\})$, with ${\mathbf x}\in\bbZ^\nu$.
(Here and thereafter $|X|$ indicates the number of elements in $X$.)
Note that all interaction terms commute with each other for such interactions.

Physically, we are more interested in interactions that have non-zero
interaction terms between different sites of $\bbZ^{\nu}$.
For example, let 
$\{S_j\}_{j=1,2,3}$ be generators of the irreducible representation of ${\mathfrak{su}(2)}$ on $\bbC^d$.
Then an interaction of $\caA_{\bbZ}$ given by 
\begin{align}\label{heisenberg}
\Phi(\{x,x+1\})=\sum_{j=1}^3\,S^{(x)}_jS^{(x+1)}_j,\quad x\in \bbZ
\end{align}
is called the antiferromagnetic Heisenberg chain and has been extensively studied.

Now, given an interaction,
we would like to define a dynamics on $\caA_{\bbZ^\nu}$ out of it.
In order for that, we need to assume that $\Phi$ is ``suitably local''.
The simplest condition among such is the condition of the uniformly bounded and finite range.
An interaction is 
of finite range if there exists  an $m\in {\mathbb N}$ such that
$\Phi(X)=0$ for $X$ with a diameter larger than $m$.
It is  uniformly bounded if it satisfies
$
\sup_{X\in  {\mathfrak S}_{\bbZ^\nu}}\lV
\Phi(X)
\rV<\infty
$.
We can relax this restriction extensively.
More generally, we define norms on interactions and consider interactions with finite norms.
See \cite{nsy}.

Given a suitably local interaction, we may define a $C^*$-dynamics, i.e., strongly continuous one-parameter group of
automorphisms on $\caA_{\bbZ^\nu}$.
For an interaction $\Phi$ and a finite set $\Lambda\subset {\bbZ^\nu}$, we define the local Hamiltonian on $\Lambda$
by
\begin{equation}\label{GenHamiltonian}
\lmk H_{\Phi}\rmk_{\Lambda}:=\sum_{X\subset{\Lambda}}\Phi(X).
\end{equation}
Then we consider the Heisenberg dynamics given by the local Hamiltonian $e^{it(H_{\Phi})_{\Lambda}} Ae^{-it(H_{\Phi})_{\Lambda}}$
and take the thermodynamic limit.
If our interaction $\Phi$ is suitably local, for example,  if it is a uniformly bounded finite range interaction,
 the limit
\begin{equation}\label{dyn}
\tau^t_{\Phi}(A)=\lim_{\Lambda\to\bbZ^\nu} e^{it(H_{\Phi})_{\Lambda}} Ae^{-it(H_{\Phi})_{\Lambda}},\quad
t\in \bbR,\quad A\in\caA_{\bbZ^\nu}
\end{equation}
exists and defines a dynamics $\tau_\Phi$ on $\caA_{\bbZ^\nu}$.
The reason why we consider the dynamics $\tau_\Phi$
instead of Hamiltonians is because there is no
mathematically meaningful limit of local Hamiltonians $\lmk H_{\Phi}\rmk_{\Lambda}$
as $\Lambda\to\bbZ^\nu$, while
the limit (\ref{dyn}) makes sense.
For this reason, in the operator algebraic framework of quantum statistical mechanics, we talk about dynamics instead of Hamiltonians.

For the same reason, a ground state is defined in terms of the dynamics $\tau_{\Phi}$.
\begin{def}\label{groundstatedef}
Let $\delta_\Phi$ be the generator of $\tau_\Phi$.
A state $\omega$ on $\caA_{\bbZ^\nu}$ is called an $\tau_\Phi$-ground state
if the inequality
\begin{align}\label{gsdef}
-i\omega\lmk A^*{\delta_\Phi}\lmk A\rmk\rmk\ge 0
\end{align}
holds
for any element $A$ in the domain $\caD({\delta_\Phi})$ of ${\delta_\Phi}$.
\end{def}
We occasionally say a ground state of $\Phi$ instead of a $\tau_\Phi$-ground state.
We denote by $\caG_\Phi$ the set of all ground states
of $\Phi$.
Clearly, $\caG_{\Phi}$ is a weak$*$-compact convex set, and it is known that 
its extremal points $\exa \caG_{\Phi}$
consists of pure states. (See Theorem 5.3.37 \cite{BR2}.)

Let $(\caH_\omega,\pi_\omega,\Omega_\omega)$ be the GNS triple of a $\tau_\Phi$-ground state $\omega$.
Then there exists a unique positive operator $H_{\omega,\Phi}$ on $\caH_\omega$ such that
$e^{itH_{\omega,\Phi}}\pi_\omega(A)\Omega_\omega=\pi_\omega(\tau^t_{\Phi}(A))\Omega_\omega$,
for all $A\in\caA_{\bbZ^\nu}$ and $t\in\mathbb R$.
We call this $H_{\omega,\Phi}$ the bulk Hamiltonian associated with $\omega$.
Note that $\Omega_\omega$ is an eigenvector of $H_{\omega,\Phi}$ with eigenvalue $0$. 
(See Proposition 5.3.19 \cite{BR2}.)

Let us consider the corresponding condition for a finite quantum system $\Mat_n$ with dynamics
given by a Hamiltonian $H$ (\ref{fst}).
Let $p$ be the spectral projection of $H$ corresponding to the lowest eigenvalue $E_{0}$.
Recall that 
a state $\omega$ on $\Mat_{n}$ is given by a density matrix $\rho$ 
with the formula $\omega(A)=\Tr\rho A$.
Let $s(\rho)$ be the support projection of this $\rho$.
Then one can check that $\omega$
 is a $\tau$-ground state if and only if
$s(\rho)$  satisfies $s(\rho)\le p$.
Recall that the last condition is the very definition of the ground state in finite-dimensional quantum mechanics.
In fact, note that the generator $\delta$ of $\tau$ in (\ref{fst}) is
$\delta(A)=i[H,A]$.
If $s(\rho)\le p$, then we have
\begin{align*}
-i\omega\lmk A^*{\delta}\lmk A\rmk\rmk=
\omega\lmk A^{*}(H-E_{0})A\rmk\ge 0,\quad A\in\Mat_{n},
\end{align*}
hence $\omega$ is a $\tau$-ground state.
Conversely, suppose that $\omega$ is a $\tau$-ground state. For any unit eigenvectors $\xi,\eta$ of $H$ with $H\xi=E_{0}\xi$, $H\eta=E\eta$, for $E>E_{0}$,
set $A\in\Mat_{n}$ to be a matrix satisfying $A\zeta=\braket{\eta}{\zeta}\xi$ for any $\zeta\in\bbC^{n}$.
Substituting this $A$, we get 
\begin{align*}
0\le -i\omega\lmk A^*{\delta}\lmk A\rmk\rmk
=\lmk E_{0}-E\rmk\braket{\eta}{\rho\eta}
\end{align*}
Because $E_{0}-E<0$, this means that $\braket{\eta}{\rho\eta}=0$ for any such $\eta$.
Hence we conclude that $p\rho p=\rho$, namely, $s(\rho)\le p$.
It means that our definition in operator algebraic framework can be regarded
as a natural generalization of the usual definition of a ground state to infinite systems.

Note, in general, that there can be many states satisfying the condition (\ref{gsdef}).
Namely, the ground state does not need to be unique.
If the ground state is unique, it is automatically an extremal point of $\caG_{\Phi}$.
As a result, it is pure.

The systems we are interested in, in this talk, are the ones with gapped ground states.
\begin{defn}\label{gapint}
We say $\Phi$ has gapped ground states in the bulk if
the followings hold.
\begin{description}
\item[(i)]
The bulk Hamiltonian $H_{\omega,\Phi}$ of any pure $\tau_\Phi$-ground state
$\omega$ has $0$ as its non-degenerate eigenvalue. 
\item[(ii)]
There exists a constant $\gamma>0$ such that
\begin{align}
\sigma\lmk H_{\omega,\Phi}\rmk\setminus \{0\}
\subset [\gamma,\infty),
\end{align}
for any pure $\tau_\Phi$-ground state
$\omega$.
Here $\sigma\lmk H_{\omega,\Phi}\rmk$
denotes the spectrum of $ H_{\omega,\Phi}$.
\end{description}
\end{defn}
We denote by $\caP$, the set of all uniformly bounded finite range interactions with gapped ground states in the bulk.

An interaction $\Phi$
is said to have a unique gapped ground state
if its ground state is unique and gapped in the
sense of Definition \ref{gapint}. 
See \cite{Affleck:1988vr},\cite{Fannes:1992vq},\\\cite{fnwpure},\cite{Ogata1},\cite{Ogata2},\cite{Ogata3}
for examples of such models.
If we consider the corresponding condition for a finite system $\Mat_n$ with dynamics
(\ref{fst}).
This condition corresponds to the situation
that ``the lowest eigenvalue of $H$ is non-degenerated and 
the difference between the lowest eigenvalue and the second-lowest eigenvalue is at least $\gamma$''.
%
One remarkable property of the unique gapped ground state is the exponential decay of correlation functions.
\begin{thm}[\cite{hk} \cite{ns06} \cite{ns09}]\label{expdec}
Let $\Phi$ be a uniformly bounded finite
range interaction with a {unique gapped ground state} $\omega_{\Phi}$.
Then the {correlation functions of $\omega_\Phi$
decay exponentially fast} : there exist 
constants $\mu>0$ and
$C>0$ such that for all $A\in\caA_X$,
$B\in \caA_Y$, with finite $X,Y\subset\bbZ^\nu$
\[
\lv
\omega_{\Phi}\lmk A B\rmk
-\omega_{\Phi}(A)\omega_{\Phi}(B)
\rv\le C \lV A\rV\lV B\rV |X|{e^{-\mu d(X,Y)}}
\]
hold.
Here $d(X,Y)$ denotes the distance between $X$ and $Y$.
\end{thm}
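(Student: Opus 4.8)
The plan is to combine the spectral gap of the bulk Hamiltonian with a Lieb--Robinson bound, following the strategy of \cite{hk},\cite{ns06},\cite{ns09}. First I would reduce to the case $\omega_\Phi(A)=\omega_\Phi(B)=0$, replacing $A$ and $B$ by $A-\omega_\Phi(A)\unit$ and $B-\omega_\Phi(B)\unit$: this leaves $\omega_\Phi(AB)-\omega_\Phi(A)\omega_\Phi(B)$ unchanged and at most doubles the norms, affecting only $C$. Let $(\caH_\omega,\pi_\omega,\Omega_\omega)$ be the GNS triple of $\omega_\Phi$, and write $\hat A:=\pi_\omega(A)$, $\hat B:=\pi_\omega(B)$ and $H:=H_{\omega,\Phi}$. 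Since $\omega_\Phi$ is the unique gapped ground state, $\Omega_\omega$ is the non-degenerate eigenvector of $H$ at $0$ and $H\ge\gamma$ on $\Omega_\omega^{\perp}$ by Definition \ref{gapint}; the mean-zero normalization gives $\hat A\Omega_\omega,\hat B\Omega_\omega,\hat B^*\Omega_\omega\in\Omega_\omega^{\perp}$. Consequently the spectral measures of $H$ between these vectors are supported in $[\gamma,\infty)$, and the quantity to estimate is $\omega_\Phi(AB)=\braket{\Omega_\omega}{\hat A\hat B\Omega_\omega}$.

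Next I would introduce a weight $w_\alpha\in L^1(\bbR)$, with parameter $\alpha>0$, whose Fourier transform $\hat w_\alpha(E)=\int_{\bbR}w_\alpha(t)e^{itE}\,dt$ approximates the Heaviside step that vanishes on $(-\infty,0]$ and equals $1$ on $[\gamma,\infty)$, with Gaussian control $|\hat w_\alpha(E)|\le e^{-\alpha E^2/c}$ for $E\le 0$ and $|1-\hat w_\alpha(E)|\le e^{-\alpha E^2/c}$ for $E\ge\gamma$, while $w_\alpha$ itself obeys $|w_\alpha(t)|\le C\,\alpha^{-1/2}e^{-t^2/(4\alpha)}$; such a function is obtained by a contour shift of a Gaussian. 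Using $\pi_\omega(\tau_\Phi^t(B))\Omega_\omega=e^{itH}\hat B\Omega_\omega$ and $e^{-itH}\Omega_\omega=\Omega_\omega$, integrating $\braket{\Omega_\omega}{\hat A\,\pi_\omega(\tau_\Phi^t(B))\Omega_\omega}$ against $w_\alpha$ and reading off the spectral support yields
\[
\omega_\Phi(AB)=\int_{\bbR}w_\alpha(t)\,\braket{\Omega_\omega}{\hat A\,\pi_\omega(\tau_\Phi^t(B))\Omega_\omega}\,dt+R_1,\qquad |R_1|\le \|A\|\,\|B\|\,e^{-\alpha\gamma^2/c}.
\]
Performing the same computation with the factors in the opposite order gives $\braket{\Omega_\omega}{\pi_\omega(\tau_\Phi^t(B))\hat A\Omega_\omega}=\braket{\hat B^*\Omega_\omega}{e^{-itH}\hat A\Omega_\omega}$, whose $w_\alpha$-integral equals $\int_\gamma^\infty\hat w_\alpha(-E)\,d\nu(E)$ and is therefore bounded by $\|A\|\,\|B\|\,e^{-\alpha\gamma^2/c}$, since $-E\le-\gamma<0$.

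Subtracting the two, the leading term collapses to a commutator,
\[
\omega_\Phi(AB)=\int_{\bbR}w_\alpha(t)\,\braket{\Omega_\omega}{[\hat A,\pi_\omega(\tau_\Phi^t(B))]\Omega_\omega}\,dt+R_2,\qquad |R_2|\le 2\|A\|\,\|B\|\,e^{-\alpha\gamma^2/c},
\]
so that $|\omega_\Phi(AB)|\le\int_{\bbR}|w_\alpha(t)|\,\|[A,\tau_\Phi^t(B)]\|\,dt+2\|A\|\,\|B\|\,e^{-\alpha\gamma^2/c}$. At this point I would invoke the Lieb--Robinson bound, the key locality input available for uniformly bounded finite range $\Phi$: there are constants $C_0,v,\mu_0>0$ with $\|[A,\tau_\Phi^t(B)]\|\le C_0\|A\|\,\|B\|\,|X|\,e^{-\mu_0(d(X,Y)-v|t|)}$.

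Finally I would split the $t$-integral at the light cone $|t|=d(X,Y)/v$. Inside it the factor $e^{\mu_0 v|t|}$ combines with the Gaussian to give a contribution $\lesssim |X|\,e^{-\mu_0 d(X,Y)+\alpha\mu_0^2 v^2}$; outside it the Gaussian tail of $w_\alpha$ contributes $\lesssim e^{-d(X,Y)^2/(4v^2\alpha)}$; and the spectral remainder is $e^{-\alpha\gamma^2/c}$. Choosing $\alpha=\kappa\,d(X,Y)$ with $\kappa>0$ small enough that $\mu_0-\kappa\mu_0^2 v^2>0$ renders all three terms $\le C\,\|A\|\,\|B\|\,|X|\,e^{-\mu d(X,Y)}$ with $\mu=\min\{\kappa\gamma^2/c,\ 1/(4v^2\kappa),\ \mu_0-\kappa\mu_0^2 v^2\}>0$, which is the assertion. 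I expect the main obstacle to be the Lieb--Robinson bound itself: proving that $\tau_\Phi$ propagates with a finite group velocity is where uniform boundedness and finite range are genuinely used, and it is the deep ingredient behind \cite{hk},\cite{ns06},\cite{ns09}, whereas the spectral filter $w_\alpha$ and the optimization above are comparatively routine once it is in hand. A secondary technical point is verifying the Gaussian estimates for $\hat w_\alpha$ and justifying the interchange of the $t$-integration with the spectral calculus of $H$.
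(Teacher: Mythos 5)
You should note at the outset that the paper itself contains no proof of Theorem \ref{expdec}: it is quoted from \cite{hk}, \cite{ns06}, \cite{ns09}. Your strategy --- reduce to mean-zero observables, filter the spectral measure of the bulk Hamiltonian through a Gaussian-type kernel, convert the filtered expression into a commutator by comparing the two operator orderings, and feed the commutator into the Lieb--Robinson bound with the choice $\alpha\sim d(X,Y)$ --- is precisely the strategy of those references, and your final light-cone splitting and optimization of exponents are correct.

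There is, however, one step that fails as literally written: the weight $w_\alpha$ you postulate does not exist. If $w_\alpha\in L^1(\bbR)$, the Riemann--Lebesgue lemma forces $\hat w_\alpha(E)\to 0$ as $E\to+\infty$, contradicting $|1-\hat w_\alpha(E)|\le e^{-\alpha E^2/c}$ on $[\gamma,\infty)$; and since the spectral measure $\mathrm{d}\braket{\hat A^*\Omega_\omega}{P_E\hat B\Omega_\omega}$ has unbounded support, you genuinely need the filter to be close to $1$ on all of $[\gamma,\infty)$, so this requirement cannot be relaxed to a compact energy window. What \cite{hk} and \cite{ns06} actually use is the kernel
\begin{align*}
w_{\alpha,\varepsilon}(t)=\frac{1}{2\pi i}\,\frac{e^{-t^{2}/4\alpha}}{t-i\varepsilon},
\end{align*}
which lies in $L^{1}$ for each fixed $\varepsilon>0$ and whose Fourier transform converges, as $\varepsilon\to 0$, to a Gaussian-smoothed Heaviside function satisfying your two-sided bounds up to harmless constants; the $\varepsilon\to0$ limit of the kernel itself is the distribution $\tfrac12\delta+\tfrac{1}{2\pi i}\,\mathrm{PV}\lmk e^{-t^{2}/4\alpha}/t\rmk$, not a function. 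Your argument survives this replacement, but two points must then be checked. First, the Dirac part contributes $\tfrac12\omega_\Phi\lmk[A,B]\rmk=0$ to the commutator term, because $A$ and $B$ commute whenever $d(X,Y)>0$; this is where disjointness of supports is used beyond the Lieb--Robinson bound. Second, the $\mathrm{PV}$ part introduces a $1/|t|$ singularity inside your light-cone integral, and your form of the Lieb--Robinson bound, $C_0\lV A\rV\lV B\rV|X|e^{-\mu_0(d(X,Y)-v|t|)}$, does not vanish at $t=0$, so that integral is not obviously convergent. This is exactly why \cite{ns06} employ the bound with the factor $\lmk e^{\mu_0 v|t|}-1\rmk$ in place of $e^{\mu_0 v|t|}$: it vanishes linearly at $t=0$ and makes the principal value integral absolutely convergent (alternatively, use that $t\mapsto\omega_\Phi\lmk[A,\tau^t_\Phi(B)]\rmk$ vanishes at $t=0$ and is Lipschitz, since local observables lie in the domain of the generator). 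With these two repairs, the rest of your proof --- the spectral estimates, the splitting at $|t|=d(X,Y)/v$, and the choice $\alpha=\kappa\,d(X,Y)$ --- goes through as written.
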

This means $\omega_\Phi$ is { ``almost like a product state"}.

\section{Paths of automorphisms generated by time-dependent interactions}
In the previous section, we considered time-independent interactions,
and derived a$C^*$-dynamics out of them.
The same procedure can be carried out for time-dependent interactions
to derive a strongly continuous paths of automorphisms.
(Recall that in finite dimensional quantum mechanics,
we also considered time-dependent Hamiltonians.)
Let $\Phi : [0,1]\ni t\to \Phi_t=(\Phi(X;t))$ be a piecewise continuous path of interactions.
Namely, for each finite $X$, the matrix-valued function
$[0,1]\ni t\to \Phi(X;t)\in\caA_{X}$ is piecewise continuous.
We then define the path of local Hamiltonians
$\lmk H_{\Phi_t}\rmk_{\Lambda}:=\sum_{X\subset{\Lambda}}\Phi(X;t)$
for each finite subset $\Lambda$ of $\bbZ^\nu$
and 
consider the solution $\alpha_{\Phi,t,\Lambda}(A)$
of the differential equation
\begin{align*}
\frac{d}{dt}\alpha_{\Phi,t,\Lambda}(A)=i\left[\lmk H_{\Phi_t}\rmk_{\Lambda}, \alpha_{\Phi,t,\Lambda}(A)\right],\quad 
\alpha_{\Phi,0,\Lambda}(A)=A.
\end{align*}
If the interactions along this path are suitably local,
analogous to the ones considered in the previous 
section, then the thermodynamic limit 
\begin{align*}
\alpha_{\Phi,t}(A)=\lim_{\Lambda\to\bbZ^\nu} \alpha_{\Phi,t,\Lambda}(A),\quad A\in\caA_{\bbZ^\nu}
\end{align*}
exists and defines a strongly continuous path of automorphisms $\alpha_{\Phi,t}$.
We denote by $\QAut\lmk \caA_{\bbZ^{\nu}}\rmk$
the set of all automorphisms $\alpha=\alpha_{\Phi,t}$ generated by some time-dependent interactions $\Phi$ in this manner.
It forms a subgroup of the automorphism group $\Aut(\caA_{\bbZ^\nu})$ on
$\caA_{\bbZ^\nu}$.

Due to the fact that $\alpha\in \QAut(\caA_{\bbZ^{\nu}})$ is given out of local interactions,
it shows some nice locality properties.
The most famous one is the Lieb-Robinson bound, which has been extensively studied and used \cite{hk} \cite{ns06} \cite{ns09}
\cite{bmns} \cite{NSY}.
It gives an estimate on $\lV
\left[
\alpha(A), B
\right]
\rV$
for 
$A\in \caA_{X}$, $B\in\caA_{Y}$, which decays as the distance between finite subsets $X$ and $Y$
goes to infinity.

The other property that is satisfied by $\alpha\in \QAut(\caA_{\bbZ^{\nu}})$ is the factorization property.
It basically says that we can split $\alpha$ into two along any cut of the system
modulo some error terms localized around the boundary.
For example,  in one-dimensional systems, if we cut the system into two parts at the origin, we have
\begin{align}\label{ofac}
\alpha=\Ad(v)\circ\lmk\alpha_{L}\otimes\alpha_{R}\rmk,
\end{align}
where $\alpha_{L}$ is an automorphism on the left infinite chain $\caA_{L}:=\caA_{(-\infty,-1]\cap\bbZ}$,
$\alpha_{R}$ an automorphism on 
the right infinite chain $\caA_{R}:=\caA_{[0,\infty)\cap\bbZ}$.
The term $\Ad(v)$ is an inner automorphism given by some
unitary $v$ in $\caA_{\bbZ}$,
which corresponds to the ``error around the boundary".
In a two-dimensional system, for example, we have the following
when we cut the system into two by the $y$-axis.
For $0<\theta<\frac\pi 2$, 
we define a double cone $C_\theta$ by
\begin{align}\label{ctdef}
C_\theta:=
\left\{
(x,y)\in\bbZ^2\mid
|y|\le \tan \theta\cdot |x|
\right\}.
\end{align}
Furthermore, $H_L$, $H_R$, $H_U$, $H_D$  denotes half left/right and upper/lower planes,
and $C_{\theta,L}:=C_\theta\cap H_L$, $C_{\theta,R}:=C_\theta\cap H_R$.
For any $0<\theta<\frac\pi 2$,
there is $\alpha_L\in\Aut\caA_{H_L}$, $\alpha_R\in\Aut\caA_{H_R}$,
and $\Theta\in\Aut \caA_{\lmk C_\theta\rmk^c}$ such that
\begin{align}\label{tfac}
\alpha=\Ad(v)
\lmk\alpha_{L}\otimes\alpha_{R}\rmk\circ\Theta
\end{align}
Actually, $\alpha$ can be cut in  many directions simultaneously.
Factorization property is simple but strong analytical property,
which turns out to be useful in the analysis of gapped ground state phases \cite{TRI} \cite{RI} \cite{2dSPT} 
\cite{MTCo}\cite{NaOg}.

Another property we note about $\alpha\in\QAut(\caA_{\bbZ^{\nu}})$ is that it 
does not create a long-range entanglement.
For example, it satisfies the following property.
 If $A$ and $B$ are
observables localized in finite regions far away from
each other, then $\alpha$ almost preserves the tensor product form of $A\otimes B$, namely,
there are operators $\tilde A,\tilde B$ strictly localized in some finite disjoint 
areas such that
$
 \tilde A\otimes \tilde B
$
approximates $\alpha(A\otimes B)$ in the norm topology.
In fact, our $\alpha$ can be regarded as a version of a quantum circuit with finite depth,
which is regarded as a quantum circuit which does not create long-range entanglement \cite{bl}.
From this point of view, we say a state has a short-range entanglement if it is of the form
\begin{align}\label{sle}
\lmk \bigotimes_{\bm x\in\bbZ^\nu}\rho_{\bm x}\rmk\circ\alpha,
\end{align}
with infinite tensor product state $\bigotimes_{\bm x\in\bbZ^\nu}\rho_{\bm x}$ and an automorphism $\alpha\in\QAut(\caA_{\bbZ^{\nu}})$.
Otherwise, we say it has a long-range entanglement.

In physics literature, the classification of states with respect to local unitaries is considered \cite{cgw1}.
Two states are equivalent if there is a local unitary connecting them.
In our framework, these local unitaries can be understood as automorphisms in $\QAut(\caA_{\bbZ^{\nu}})$, and
 the classification in \cite{cgw1} can be reformulated as follows.
For two states $\omega_{1},\omega_{0}$ on $\caA_{\bbZ^{\nu}}$,
we write $\omega_{1}\sim_{\mathrm l.u.}\omega_{0}$
if there is an automorphism $\alpha\in\QAut(\caA_{\bbZ^{\nu}})$ such that $\omega_{1}=\omega_{0}\circ\alpha$.
This gives some equivalence relation.
From the fact that automorphisms in $\QAut(\caA_{\bbZ^{\nu}})$ do not create long-range entanglement,
this is one physically natural criterion of classification of states.

\section{The classification of gapped ground state phases}\label{clgg}
 
The automorphisms in $\QAut(\caA_{\bbZ^{\nu}})$ are of fundamental importance in the classification problem of gapped ground state phases. In a word, ground state spaces of two interactions $\Phi_0,\Phi_1\in\caP$ (Definition \ref{gapint})
are connected
to each other via such automorphisms if they are equivalent in the classification of gapped ground state phases.
In this section, we introduce such theorem, called the automorphic equivalence.
The automorphic equivalence started as Hasting's adiabatic Lemma \cite{hw} in finite-dimensional quantum mechanical system. There have been seminal mathematical polishment and
generalization after that \cite{bmns}, \cite{NSY}
in the context of the thermodynamic limit
of quantum spin systems.
Here we introduce a version in \cite{mo},
where we require the spectral gap only in the infinite systems
(i.e., the setting in section \ref{quantumspinsec}).

The classification problem of gapped ground states in infinite
systems can be roughly described as follows.

We say two interactions $\Phi_0,\Phi_1\in \caP$
are equivalent if 
there is a path
of interactions $\Phi : [0,1]\to \caP$
satisfying the following conditions
\begin{enumerate}
\item $\Phi(0)=\Phi_0$ and $\Phi(1)=\Phi_1$
\item $[0,1]\ni s\mapsto \Phi(X;s)\in\caA_X$ is continuous and piecewise $C^{1}$.
The interaction $\Phi(s)$ and its derivative are of finite range, bounded with respect to some norm uniformly in $s\in[0,1]$.
(See (ii)-(iv) of Assumption 1.2 of \cite{mo}.)
\item for each pure $\tau_{\Phi_0}$-ground state $\varphi_0$, there is a unique smooth path of 
states $\varphi_s$ where each
$\varphi_s$ is a pure $\tau_{\Phi(s)}$-ground state.
(Here, smooth means the expectation value of 
some class of elements in $\caA_{\bbZ^{\nu}}$ with respect to $\varphi_{s}$ is differentiable,
and its derivative is not too large compared to
some norm. See \cite{mo} Assumption 1.2 (vii).)
For each $s\in [0,1]$, the map $\exa\caG_{\Phi_{0}}\ni \varphi_{0}\mapsto \varphi_{s}\in \exa\caG_{\Phi_{s}}$
gives a bijection.
\item The gap is uniformly bounded from below by some $\gamma>0$ along the path, i.e.,
$\sigma(H_{\psi_s,\Phi(s)})\setminus\{0\}\subset [\gamma,\infty)$ for
all $s\in[0,1]$ and a pure
$\tau_{\Phi_s}$-ground state $\psi_s$.
\end{enumerate}
We write $\Phi_0\sim \Phi_1$ if
$\Phi_0,\Phi_1\in\caP$ are equivalent in this sense.

The automorphic equivalence in this setting is given as follows.
\begin{thm}\label{automorphic}\cite{mo}
If $\Phi_0\sim\Phi_1$, then
there is an $\alpha\in \QAut(\caA_{\bbZ^\nu})$
such that 
\begin{align}
\caG_{\Phi_1}=\caG_{\Phi_0}\circ\alpha.
\end{align}
\end{thm}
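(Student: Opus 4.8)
The plan is to construct $\alpha$ by \emph{quasi-adiabatic continuation} (the spectral flow) along the path $\Phi(s)$, following Hastings' adiabatic lemma and its thermodynamic-limit refinements. The guiding principle, already visible in finite volume, is that the spectral flow $\alpha_s$ generated by a suitable time-dependent interaction $\Psi(s)$ conjugates the ground state projection of $(H_{\Phi(0)})_\Lambda$ onto that of $(H_{\Phi(s)})_\Lambda$, so that $\omega\circ\alpha_s^{-1}$ is again a ground state. Concretely I would build the generator directly in the thermodynamic limit: for each finite $X$,
\[
\Psi(X;s):=\int_{-\infty}^\infty W_\gamma(t)\,\tau^{\Phi(s)}_t\bigl(\dot\Phi(X;s)\bigr)\,dt,
\]
where $\dot\Phi(X;s)$ exists by condition (2), $\tau^{\Phi(s)}_t$ is the dynamics (\ref{dyn}) of $\Phi(s)$, and $W_\gamma$ is the standard odd, rapidly decaying spectral-flow weight whose Fourier transform is $-1/\omega$ for $|\omega|\ge\gamma$. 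The role of $W_\gamma$ is the parallel-transport identity: against a Hamiltonian with spectrum in $\{0\}\cup[\gamma,\infty)$ it inverts the relevant frequencies while killing the transitions inside the ground sector.

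Next I would show that $\Psi(s)$ is a genuine quasi-local time-dependent interaction generating an element of $\QAut(\caA_{\bbZ^\nu})$. Here the Lieb-Robinson bound for $\tau^{\Phi(s)}_t$ is decisive: combined with the rapid decay of $W_\gamma$ it makes the $t$-integral converge and forces $\Psi(X;s)$, after re-summation into strictly localized terms, to decay rapidly in $\mathrm{diam}(X)$, uniformly in $s$ thanks to the uniform bounds on $\Phi(s)$ and $\dot\Phi(s)$ from condition (2). Hence the thermodynamic limit defining $\alpha_s=\alpha_{\Psi,s}\in\QAut(\caA_{\bbZ^\nu})$ exists.

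Finally I would verify the flow property intrinsically in the infinite system. For a pure $\tau_{\Phi_0}$-ground state $\varphi_0$ set $\tilde\omega_s:=\varphi_0\circ\alpha_s^{-1}$, and check that $\tilde\omega_s$ satisfies the ground-state inequality (\ref{gsdef}) for $\Phi(s)$ by passing to the GNS representation and using the parallel-transport identity to show that $\alpha_s$ carries the bulk Hamiltonian $H_{\varphi_0,\Phi_0}$ onto $H_{\tilde\omega_s,\Phi(s)}$ correctly. Since $s\mapsto\tilde\omega_s$ is then a smooth path of pure ground states with $\tilde\omega_0=\varphi_0$, the uniqueness in condition (3) identifies it with the given path, and the bijection in (3) promotes the identity on extremal points to $\exa\caG_{\Phi_1}=(\exa\caG_{\Phi_0})\circ\alpha_1^{-1}$; taking weak$*$-closed convex hulls gives $\caG_{\Phi_1}=\caG_{\Phi_0}\circ\alpha$ with $\alpha=\alpha_1^{-1}\in\QAut(\caA_{\bbZ^\nu})$.

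The main obstacle is this last step under the hypothesis that the gap holds \emph{only in the bulk}. Unlike the finite-volume-gap setting of \cite{bmns,NSY}, one cannot produce finite-volume ground-state projections $P_\Lambda(s)$ with a uniform gap and pass to the limit; the parallel-transport identity must instead be established intrinsically in each GNS representation $\pi_{\varphi_s}$ from the spectrum $\{0\}\cup[\gamma,\infty)$ of the bulk Hamiltonian. Controlling the generator and the flow simultaneously against these infinite-volume objects — in particular showing that $\alpha_s$ genuinely intertwines the two bulk Hamiltonians, not merely their finite-volume truncations — is the delicate point, and it is exactly where the smoothness and uniqueness assumptions (3) become indispensable.
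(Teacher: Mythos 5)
You follow the same overall route as the paper. The paper's own proof consists of exactly two moves: quote Remark 1.4 of \cite{mo} to obtain $\alpha_s\in\QAut(\caA_{\bbZ^\nu})$ with $\varphi_s=\varphi_0\circ\alpha_s$ for every extremal ground state $\varphi_0$ (the same $\alpha_s$ for all $\varphi_0$), and then pass from extremal points to the full ground state spaces by Krein--Milman. Your last step reproduces this convexity argument, and your reconstruction of the quoted result --- the generator built from $W_\gamma$ and $\tau^{\Phi(s)}_t\bigl(\dot\Phi(X;s)\bigr)$, re-summed into local terms and controlled by Lieb--Robinson bounds so that $\alpha_s\in\QAut(\caA_{\bbZ^\nu})$ --- is indeed the spectral-flow mechanism behind \cite{mo}.

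The genuine gap is in your third step, the one place where the spectral hypothesis has to do real work. You propose to prove \emph{first} that $\tilde\omega_s=\varphi_0\circ\alpha_s^{-1}$ satisfies the ground-state inequality for $\Phi(s)$, ``using the parallel-transport identity to show that $\alpha_s$ carries $H_{\varphi_0,\Phi_0}$ onto $H_{\tilde\omega_s,\Phi(s)}$,'' and only \emph{then} invoke condition (3). This cannot be carried out as stated. The spectral flow does not intertwine the dynamics $\tau_{\Phi(0)}$ and $\tau_{\Phi(s)}$ --- it is designed to transport ground-state sectors, not Hamiltonians --- so no identity relating the two bulk Hamiltonians is available; worse, $H_{\tilde\omega_s,\Phi(s)}$ is only \emph{defined} once $\tilde\omega_s$ is known to be a $\tau_{\Phi(s)}$-ground state, which is precisely what you are trying to prove, so the step is circular. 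Moreover, there is no freestanding parallel-transport identity in this setting: in finite volume it comes from differentiating the instantaneous spectral projections $P_\Lambda(s)$, and with a gap only in the bulk the sole substitute for that spectral data lives in the GNS representations of the states $\varphi_s$ whose existence condition (3) \emph{assumes}. Accordingly, the proof in \cite{mo} runs in the opposite direction from yours: it takes the assumed smooth path $\varphi_s$ as input, establishes an infinite-volume perturbation identity expressing $\frac{d}{ds}\varphi_s(A)$ through the spectral-flow generator in the GNS representation of $\varphi_s$ (this is where both the spectrum $\{0\}\cup[\gamma,\infty)$ of $H_{\varphi_s,\Phi(s)}$ and the smoothness hypothesis enter), and concludes that $\varphi_s\circ\alpha_s^{-1}$ does not depend on $s$; the ground-state property of $\varphi_0\circ\alpha_s$ is a corollary of the identification with $\varphi_s$, not an intermediate step. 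A telltale sign that your plan deviates from the provable route: it never uses the differentiability of $s\mapsto\varphi_s$, only uniqueness, yet that differentiability is explicitly demanded in condition (3) (Assumption 1.2 (vii) of \cite{mo}); indeed, if your step could be completed as written, the existence of gapped ground states along the path would follow from that of $\varphi_0$ alone, whereas it is an assumption both here and in \cite{mo}.
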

\begin{proof}
We use the notation above for $\Phi_0\sim\Phi_1$.
From Remark 1.4. of \cite{mo},
there is a path of automorphisms $\alpha_{s}\in \QAut(\caA_{\bbZ^{\nu}})$
satisfying $\varphi_{s}=\varphi_{0}\circ \alpha_{s}$
for each state $\varphi_{0},\varphi_{s}$ in (3).
This $\alpha_s$ is independent of the choice of
$\varphi_0$.
Because $\caG_{\Phi(s)}$ is a convex weak$*$-compact set, it coincides with
the weak$*$-closure of the convex hull of extremal points of $\caG_{\Phi(s)}$.
Hence we see that this $\alpha_{s}$ maps $\caG_{\Phi(0)}$ to
$\caG_{\Phi(s)}$ bijectively.
\end{proof}
Hence automorphisms in $\QAut(\caA_{\bbZ^{\nu}})$ connect ground state spaces of $\Phi_0$ and $\Phi_1$.
For this reason, this class of automorphisms
is of fundamental importance.
The point here is that it is not only that 
there is some automorphism connecting the ground
state spaces, but also, we know the details of the 
automorphisms.

Note that for interactions $\Phi_{1},\Phi_{0}\in\caP$
with unique ground states $\omega_{\Phi_{1}}$, $\omega_{\Phi_{0}}$,
 $\Phi_{1}\sim \Phi_{0}$
implies $\omega_{\Phi_{1}}\sim_{\mathrm l.u.} \omega_{\Phi_{0}}$ from Theorem \ref{automorphic}.
For the moment of writing, it is not clear for us if the converse is true.

We call an on-site interaction (defined in (\ref{onsitedef}))
with a unique gapped ground state a trivial interaction.
The unique ground state $\omega_{\Phi_0}$ of a trivial interaction $\Phi_0$
is of infinite tensor product form.
One can easily see that any two trivial interactions
are equivalent. 
The equivalence class $\caP_0$ of interactions including
these trivial interactions is called a trivial phase.
Any interaction $\Phi$ in the trivial phase
has a unique ground state, and from Theorem \ref{automorphic}, 
it has a short-range entanglement (\ref{sle}).

\section{Symmetry protected topological (SPT) phases}
The trivial phase $\caP_0$ consists of interactions
that are connected to trivial interactions,
and as a result, its ground state has a short-range entanglement and is basically the same as product states.
From this point of view, the trivial phase itself may not be that interesting.
However, if we introduce some symmetry to the game, 
we can extract some interesting mathematical structure out of it.
This is so-called symmetry protected topological (SPT) phases, which were introduced by Gu and Wen \cite{GuWen2009} \cite{cglw} \cite{ChenGuWEn2011}.
Throughout this section $\omega_\Phi$
for $\Phi\in \caP_0$ indicates the unique ground state
of $\Phi$.

In this talk, as a symmetry, we consider an on-site finite group symmetry, which is defined as follows.
(A study on the global reflection symmetry in one-dimensional systems
can be found in \cite{RI}.)
We fix a  finite group $G$ and a (projective) unitary representation $U$ of $G$
on $\bbC^d$.
Then there is a unique automorphism $\beta_{g}$ satisfying
\begin{align*}
\beta_g(A)=\lmk \bigotimes_{x\in\Lambda} U(g)\rmk A\lmk \bigotimes_{x\in\Lambda} U(g)^*\rmk,\;
g\in G,\;\;  A\in\caA_{\Lambda},\;\; \Lambda\in{\mathfrak S}_{\bbZ^\nu}.
\end{align*}
Clearly, this gives an action of $G$ on $\caA_{\bbZ^\nu}$, i.e., $\beta_{g}\beta_{h}=\beta_{gh}$
for $g,h\in G$.
We call this action of $G$, an on-site symmetry given by $G$ and $U$.
We say an interaction $\Phi$ is $\beta$-invariant
if $\beta_g(\Phi(X))=\Phi(X)$
for all $X\in {\mathfrak S}_{\bbZ^\nu}$ and $g\in G$.
For a ground state $\varphi$ of a $\beta$-invariant interaction $\Phi$, one can check that
$\varphi\circ\beta_{g}$ is also a ground state of $\Phi$.
Therefore, if a $\beta$-invariant interaction $\Phi$
has a unique ground state $\omega_{\Phi}$, the ground state is
$\beta$-invariant, $\omega_{\Phi}\circ\beta_{g}=\omega_{\Phi}$.

What we are interested in, in this section is the set of
all $\beta$-invariant interactions 
 in the trivial phase $\caP_0$.
 We denote the set of all such interactions by $\caP_{0,\beta}$.
We would like to classify them with respect to the following 
criterion.
Two interactions $\Phi_0$, $\Phi_1$ are $\beta$-equivalent
if there is a smooth path of interactions in $\caP_{0,\beta}$
satisfying the conditions (1)-(4) we saw in section \ref{clgg}.
We write $\Phi_0\sim_\beta\Phi_1$
in this case.
The difference between $\sim$ and $\sim_\beta$
is that we require the symmetry to be preserved along the path.
Because of this additional condition, 
there can be interactions $\Phi_0,\Phi_1\in\caP_{0,\beta}$,
which satisfy $\Phi_0\sim\Phi_1$ (by definition)
but not $\Phi_0\sim_\beta\Phi_1$.
In other words, $\caP_{0,\beta}$ may split into possibly
multiple equivalence classes.
The resulting equivalence classes are the
symmetry protected topological (SPT) phases.

For this SPT classification problem,
physicists and algebraic topologists have a conjecture
\cite{ktt} \cite{yonekura}.
They say that 
 SPT-phases should be understood in terms of the invertible quantum field theory. As a result,
 for a finite group $G$, SPT-phases should be classified by the Pontryagin dual of bordism group on the classifying space
  $BG$ of $G$.
 In one and two-dimensions, these Pontryagin duals are 
 {$H^2(G,\Uo)$, $H^3(G, \Uo)$}.
 In fact, we can derive these group cohomology valued invariants out of
 our general microscopic models of 
 in thoes dimensions.
\begin{thm}\label{sptthm}\cite{TRI}\cite{2dSPT}
There is a $H^{2}(G,\Uo)$-valued invariant 
for one-dimensional SPT-phases.
There is a {$H^{3}(G,\Uo)$}-valued invariant for
two-dimensional SPT-phases.
\end{thm}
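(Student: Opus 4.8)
The plan is to attach to each $\Phi\in\caP_{0,\beta}$ a group cohomology class and to show it depends only on the $\sim_\beta$-class, treating the one- and two-dimensional cases by the factorization property of automorphisms in $\QAut(\caA_{\bbZ^{\nu}})$. In one dimension, since $\Phi\in\caP_0$ its unique ground state $\omega:=\omega_\Phi$ is pure, $\beta$-invariant, and short-range entangled, so by the factorization (\ref{ofac}) the GNS representation of $\omega$ has the split property with respect to the cut at the origin. Because $\omega$ is pure and $\beta$-invariant there is, by (\ref{implement}), a unitary $\Gamma(g)$ on the GNS Hilbert space implementing $\beta_g$; the split property lets me factor $\Gamma(g)=\Gamma_L(g)\Gamma_R(g)$ into unitaries implementing the symmetry separately on the left and right half-chains. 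Since $\beta$ is a genuine $G$-action, the defect $\Gamma_R(g)\Gamma_R(h)\Gamma_R(gh)^{*}$ is a scalar $c(g,h)\in\Uo$, which one checks is a $2$-cocycle, and I define the one-dimensional invariant to be its class $[c]\in H^2(G,\Uo)$.

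The next step is well-definedness and $\sim_\beta$-invariance. Any two admissible choices of the implementers $\Gamma_R(g)$ differ by elements of the relevant commutant and by phases, so they change $c$ only by a coboundary, which gives a well-defined $[c]$. For invariance I would use that a symmetric path in $\caP_{0,\beta}$ yields, via Theorem \ref{automorphic} and Remark 1.4 of \cite{mo}, a path $\alpha_s\in\QAut(\caA_{\bbZ})$ with $\varphi_s=\varphi_0\circ\alpha_s$ that commutes with the symmetry, $\beta_g\circ\alpha_s=\alpha_s\circ\beta_g$. Factorizing $\alpha_s$ as in (\ref{ofac}) conjugates the half-chain implementers by a unitary that varies continuously in $s$, so the class $[c]$ varies continuously along the path; since $H^2(G,\Uo)$ is discrete it is therefore constant, and $[c]$ descends to an invariant of the SPT phase.

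For the two-dimensional statement I replace the single cut by the double-cone geometry (\ref{ctdef}) and the factorization (\ref{tfac}). Restricting $\beta_g$ to the half-plane $H_R$ and stripping off the bulk factors $\alpha_L\otimes\alpha_R$ using (\ref{tfac}) produces an automorphism localized near the $y$-axis; localizing once more on the cones $C_{\theta,R}$ and their upper/lower complements yields cone-localized implementers $\eta_g$ of the symmetry. The composite $\eta_g\circ\eta_h$ then differs from $\eta_{gh}$ by a unitary $\Omega(g,h)$ supported near the cone boundary, and comparing the two bracketings of the triple composite $\eta_g\circ\eta_h\circ\eta_k$ yields a phase $c(g,h,k)\in\Uo$ satisfying the $3$-cocycle identity. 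I define the two-dimensional invariant to be $[c]\in H^3(G,\Uo)$, and argue well-definedness and $\sim_\beta$-invariance as before, now checking that every ambiguity shifts $c$ by a $3$-coboundary.

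The hard part will be the two-dimensional cocycle construction. In one dimension the only analytic input is that the half-chain symmetry is implementable against a pure split state, after which the cocycle is essentially algebraic. In two dimensions one must control compositions of automorphisms supported on overlapping cones, show that the associator $\Omega(g,h)$ is genuinely localized near the boundary, and verify that the failure of associativity collapses to a single scalar obeying the pentagon relation. Keeping all the Lieb--Robinson approximations uniform enough that these boundary unitaries are defined up to controllable errors, and confirming that those errors cannot change the final class in $H^3(G,\Uo)$, is the technical core of the argument and is where most of the work of \cite{2dSPT} resides.
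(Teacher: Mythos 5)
Your proposal is correct and is essentially the paper's own argument: in one dimension you obtain the $H^{2}(G,\Uo)$ class from half-chain implementers of the symmetry against the pure, $\beta$-invariant, split ground state (the paper produces the same unitaries $u_g$ by first proving $\omega_\Phi\circ\beta^{R}_{g}\simeq\omega_\Phi$ via the factorization (\ref{ofac}) and then implementing, whereas you implement the global symmetry first and split — an interchangeable route the paper itself attributes to Matsui's split-property theorem), and in two dimensions you reproduce the cone-localized defect automorphisms of Proposition \ref{etapro}, the associator unitaries of (\ref{uimple}), and the Jones-style cocycle-action comparison (\ref{3coc}) giving the scalar $3$-cocycle. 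The invariance mechanism you describe (a $\beta$-commuting path of automorphisms from Theorem \ref{automorphic}, factorized as in (\ref{ofac})/(\ref{tfac}), conjugating the implementers) is exactly the paper's; note only that this conjugation leaves the cocycle unchanged outright, so your continuity-plus-discreteness overlay for $H^2(G,\Uo)$ is unnecessary.
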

For the rest of this section, we explain how to find such invariants
out of general models.
In the analysis of gapped ground state phases, there is a general guiding principle
to find an invariant. That is,
 cut the system into two and look at the edge.
 This principle is sometimes called the bulk-edge correspondence.
 In order to derive the invariant in the Theorem \ref{sptthm},
 we follow this principle and restrict our group action 
 $\beta$ to the
half of the system.
Namely, we consider the group actions
 \begin{align}
 \beta_g^R:=\id_{\caA_{L}}
\otimes \bigotimes_{x\ge 0} 
\Ad\lmk U(g)\rmk,\quad
\beta_g^U:=\id_{\caA_{H_D}}
\otimes \bigotimes_{(x,y)\in H_U}
\Ad\lmk U(g)\rmk,
 \end{align}
 in one and two dimensions, respectively. 
 We investigate the effect of these actions on our unique ground state
$\omega_\Phi$ for $\Phi\in\caP_{0,\beta}$.

Let us start with one-dimensional systems.
Recall that  $\omega_\Phi$ has a short-range entanglement, and
is $\beta$-invariant.
From these facts, we expect that
the effect of $\beta^R$ is not much recognizable on the left infinite chain, far away from the origin.
On the other hand, on the right infinite chain, far away from the origin,
the difference between $\beta$ and $\beta^R$
 are not much recognizable.
 Combining this and the fact that $\omega_\Phi$ is $\beta$-invariant,
 we conclude that the effect of $\beta^R$ is not much recognizable on the right infinite chain, far away from the origin.
 As a result, we expect that the effect of $\beta^R$
 on $\omega_\Phi$ should be localized around the origin.
 In other words, $\omega_\Phi$ and $\omega_\Phi\circ\beta^{R}_{g}$
are macroscopically the same.
It turns out to be true, mathematically, in the following sense.
\begin{prop}\label{obd}
The state $\omega_\Phi$ and $\omega_\Phi\circ\beta^{R}_{g}$
are equivalent.
\end{prop}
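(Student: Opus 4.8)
The plan is to prove the stronger statement that $\beta^{R}_{g}$ is \emph{unitarily implemented} in the GNS representation of $\omega:=\omega_\Phi$. By the discussion around (\ref{implement}), exhibiting a unitary $W$ on $\caH_\omega$ with $\Ad(W)\circ\pi_\omega=\pi_\omega\circ\beta^{R}_{g}$ is equivalent to the assertion $\omega\simeq\omega\circ\beta^{R}_{g}$, because $\pi_\omega\circ\beta^{R}_{g}$ is itself a GNS representation of $\omega\circ\beta^{R}_{g}$. Here $\omega$ is the unique gapped ground state of $\Phi\in\caP_{0,\beta}$, hence pure, so $\pi_\omega$ is irreducible with $\pi_\omega(\caA_{\bbZ})''=\caB(\caH_\omega)$, and it is $\beta$-invariant. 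The whole point will be to build $W$ inside the right-hand commutant.

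The key input I would invoke is the split property of $\omega$ for the cut at the origin. Since $\omega$ is pure with exponentially decaying correlations (Theorem~\ref{expdec}), it is a known theorem (due to Matsui) that $\caH_\omega$ factorizes, $\caH_\omega\cong\caH_{L}\otimes\caH_{R}$, so that $\pi_\omega(\caA_{L})''=\caB(\caH_{L})\otimes\unit$ and $\pi_\omega(\caA_{R})''=\unit\otimes\caB(\caH_{R})$. Defining the representation $\pi_{R}$ of $\caA_{R}$ on $\caH_{R}$ by $\pi_\omega(A)=\unit\otimes\pi_{R}(A)$ for $A\in\caA_{R}$, the equality $\pi_\omega(\caA_{R})''=\unit\otimes\caB(\caH_{R})$ says exactly that $\pi_{R}$ is irreducible. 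Establishing this type-I factorization across the cut is the real substance of the argument, and it is precisely the place where short-range entanglement (equivalently, the exponential decay) is indispensable; I expect this to be the main obstacle, while the rest is bookkeeping.

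Next I bring in $\beta$-invariance. The automorphism $\beta^{R}_{g}$ restricts to the identity on $\caA_{L}$ and to $\gamma_{g}:=\bigotimes_{x\ge 0}\Ad\lmk U(g)\rmk\in\Aut(\caA_{R})$ on $\caA_{R}$, and the global $\beta_{g}$ restricts to the same $\gamma_{g}$ on $\caA_{R}$. Hence for $A\in\caA_{R}$ we have $\omega(\gamma_{g}(A))=\omega(\beta_{g}(A))=\omega(A)$, so the reduced state $\omega_{R}:=\omega|_{\caA_{R}}$ is $\gamma_{g}$-invariant. Now $\omega_{R}$ is a normal state of the irreducible representation $\pi_{R}$, and the identity $\omega_{R}=\omega_{R}\circ\gamma_{g}$ exhibits $\omega_{R}$ simultaneously as a normal state of $\pi_{R}\circ\gamma_{g}$. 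Two irreducible representations of $\caA_{R}$ sharing a normal state are quasi-equivalent, and for irreducible representations quasi-equivalence is unitary equivalence; therefore there is a unitary $W_{R}$ on $\caH_{R}$ with $W_{R}\pi_{R}(A)W_{R}^{*}=\pi_{R}(\gamma_{g}(A))$ for all $A\in\caA_{R}$.

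Finally I set $W:=\unit\otimes W_{R}\in\pi_\omega(\caA_{R})''$ and verify it implements $\beta^{R}_{g}$. For $A\in\caA_{R}$, $\Ad(W)\pi_\omega(A)=\unit\otimes W_{R}\pi_{R}(A)W_{R}^{*}=\pi_\omega(\gamma_{g}(A))=\pi_\omega(\beta^{R}_{g}(A))$; for $A\in\caA_{L}$, $W$ commutes with $\pi_\omega(A)\in\caB(\caH_{L})\otimes\unit$ and $\beta^{R}_{g}(A)=A$, so $\Ad(W)\pi_\omega(A)=\pi_\omega(A)=\pi_\omega(\beta^{R}_{g}(A))$. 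Since $\caA_{L}$ and $\caA_{R}$ generate $\caA_{\bbZ}$, this gives $\Ad(W)\circ\pi_\omega=\pi_\omega\circ\beta^{R}_{g}$, whence $\pi_\omega\circ\beta^{R}_{g}\cong\pi_\omega$ and therefore $\omega\simeq\omega\circ\beta^{R}_{g}$, as claimed. As a consistency check on the localization heuristic in the text, $\beta$-invariance together with $\beta_{g}=\beta^{L}_{g}\circ\beta^{R}_{g}$ also yields $\omega\circ\beta^{R}_{g}=\omega\circ\beta^{L}_{g^{-1}}$, confirming that the effect of the half-sided twist is concentrated at the origin.
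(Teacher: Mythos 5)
Your argument is correct, but it is not the paper's proof; it is the alternative route that the paper only records in a remark after its own proof. The paper stays entirely inside the trivial-phase machinery: since $\Phi\in\caP_{0}$, Theorem \ref{automorphic} gives $\omega_{\Phi}=\omega_{\Phi_{0}}\circ\alpha$ with $\omega_{\Phi_{0}}=\omega_{L}\otimes\omega_{R}$ a product state and $\alpha\in\QAut(\caA_{\bbZ})$, the factorization property (\ref{ofac}) yields $\omega_{\Phi}\simeq(\omega_{L}\alpha_{L})\otimes(\omega_{R}\alpha_{R})$, and then $\beta$-invariance forces $\omega_{R}\alpha_{R}\beta^{R}_{g}\simeq\omega_{R}\alpha_{R}$, which gives the claim. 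You instead invoke Matsui's split-property theorem \cite{Matsui3} to obtain the type-I factorization $\caH_{\omega}\cong\caH_{L}\otimes\caH_{R}$ with $\pi_{\omega}(\caA_{L})''=\caB(\caH_{L})\otimes\unit$ and $\pi_{\omega}(\caA_{R})''=\unit\otimes\caB(\caH_{R})$, implement the restriction $\gamma_{g}$ of $\beta_{g}$ to $\caA_{R}$ by a unitary $W_{R}$ via the folium argument (two irreducible representations sharing a normal state are quasi-equivalent, hence unitarily equivalent), and set $W=\unit\otimes W_{R}$; all of these steps, including the purity argument showing $\pi_{R}$ is irreducible and the final verification on $\caA_{L}\cup\caA_{R}$, are sound. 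The trade-off is this: the paper's proof needs nothing beyond the definition of $\caP_{0}$ and the factorization property, which is the same toolkit used later to show the $H^{2}(G,\Uo)$ class is an invariant of $\sim_{\beta}$, so it is essentially free in context; your proof imports the deeper input of the split property, but in exchange works for an arbitrary $\beta$-invariant unique gapped ground state rather than only for ground states in the trivial phase --- exactly the strengthening the paper itself notes in the sentence ``Proposition \ref{obd} itself holds for general $\beta$-invariant unique gapped ground state,'' and your construction moreover produces the implementing unitary in the explicit tensor-factorized form $\unit\otimes W_{R}$. One small inaccuracy: you present exponential decay of correlations (Theorem \ref{expdec}) as the hypothesis of Matsui's theorem, whereas \cite{Matsui3} derives the split property for unique gapped ground states from boundedness of entanglement entropy (the one-dimensional area law); the conclusion you use is correct and is the one the paper credits to Matsui, but the gap, not the correlation decay, is the hypothesis you should cite.
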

This can be seen very easily.
Recall from the definition that $\Phi\in\caP_{0}$
means $\Phi\sim\Phi_{0}$ with some trivial interaction $\Phi_{0}$.
From Theorem \ref{automorphic}, we have $\omega_{\Phi}=\omega_{\Phi_{0}}\circ\alpha$
with some $\alpha\in\QAut(\caA_{\bbZ})$.
Recall that as a trivial interaction, $\Phi_{0}$ has a unique ground state of infinite tensor product form.
In particular, we can write $\omega_{\Phi_{0}}$ as $\omega_{\Phi_{0}}=\omega_{L}\otimes \omega_{R}$
with pure states $\omega_{L}$, $\omega_{R}$ on the left and right infinite chains $\caA_{L}$, $\caA_{R}$,
respectively.
Recall also that our $\alpha$ satisfies the factorization property (\ref{ofac}).
Combining these, we conclude that 
\begin{align}
\omega_{\Phi}\simeq \lmk \omega_{L}\otimes \omega_{R}\rmk\circ
\lmk\alpha_{L}\otimes\alpha_{R}\rmk,
\end{align}
with some automorphisms $\alpha_{L},\alpha_{R}$ on $\caA_{L}$, $\caA_{R}$.
From this and the invariance of $\omega_{\Phi}$ under $\beta_{g}$,
we see that
$
\omega_{L}\alpha_{L}\beta_{g}^{L}\otimes \omega_{R}\alpha_{R}\beta_{g}^{R}
\simeq \omega_{L}\alpha_{L}\otimes \omega_{R}\alpha_{R},
$
where $\beta^{L}$,$\beta^R$ are the restrictions of $\beta$ to the left, right infinite chains.
This implies $\omega_{R}\alpha_{R}\beta_{g}^{R}\simeq \omega_{R}\alpha_{R}$,
 hence we get
 \begin{align}
 \omega_{\Phi}\beta_{g}^R\simeq
 \omega_{L}\alpha_{L}\otimes \omega_{R}\alpha_{R}\beta_{g}^R\simeq
 \omega_{L}\alpha_{L}\otimes \omega_{R}\alpha_{R}\simeq
 \omega_{\Phi},
 \end{align}
proving the claim.

Note from section \ref{quantumspinsec} that
Proposition \ref{obd} means $\beta^{R}_{g}$ is implementable by a unitary
$u_{g}$ in the GNS representation $(\caH_{\omega_{\Phi}},\pi_{\omega_{\Phi}})$
of $\omega_{\Phi}$, i.e.,
\begin{align}
\Ad\lmk u_{g}\rmk\circ\pi_{\omega_{\Phi}}=\pi_{\omega_{\Phi}}\circ \beta^{R}_{g}.
\end{align}
Because $\beta^{R}_{}$ is a group action, we have
\begin{align}\label{ughgh}
\Ad\lmk u_{g}u_{h}\rmk\circ\pi_{\omega_{\Phi}}=\pi_{\omega_{\Phi}}\circ \beta^{R}_{g}\beta^{R}_{h}
=\pi_{\omega_{\Phi}}\circ \beta^{R}_{gh}=\Ad\lmk u_{gh}\rmk\circ\pi_{\omega_{\Phi}},\quad g,h\in G.
\end{align}
Recall that $\omega_{\Phi}$ is a unique ground state of $\Phi$, hence it is pure.
As a result, $\pi_{\omega_{\Phi}}(\caA_{\bbZ})$
is dense in $\caB(\caH_{\omega_{\Phi}})$ with respect to the strong operator topology.
From this, (\ref{ughgh}) implies there is some $\sigma(g,h)\in \Uo$ such that
\begin{align}
u_{g}u_{h}=\sigma(g,h) u_{gh},\quad g,h\in G.
\end{align}
In other words, $(u_{g})$ forms a projective representation.
As a result, we obtain $H^{2}(G,\Uo)$-valued index out of it.

Using the automorphic equivalence Theorem \ref{automorphic} and the factorization property
of the automorphism therein,
one can show that it is in fact an invariant of our classification $\sim_{\beta}$ \cite{TRI}.
The point of the proof is, when $\Phi_{0}\sim_{\beta}\Phi_{1}$, the time-dependent interactions
giving $\alpha\in\QAut(\caA_{\bbZ})$
in Theorem \ref{automorphic} can be taken to be $\beta$-invariant.
Proposition \ref{obd} itself holds for general $\beta$-invariant unique gapped ground state.
This is thanks to the theorem by Matsui \cite{Matsui3}
showing the split property for unique gapped ground states.
Projective representations associated to split states have been known from the time around 00 \cite{Matsui2}
among operator algebraists. What is new here is that the associated 
cohomology class is an invariant of our classification.
In fact, this $H^2(G,\Uo)$-valued index is
a complete invariant of pure $\beta$-invariant split states
with respect to some classification \cite{GS}.
This index can be used to show {L}ieb-{S}chultz-{M}attis type theorems
\cite{LSM}\cite{AL}\cite{Matsui2}\cite{NS} 
(no-go theorems for the existence of
unique gapped ground state under some symmetry),
 for finite groups symmetries
\cite{ot}\cite{OTT}.

For two dimension, $\omega_{\Phi}\circ\beta^{U}_{g}$ is not equivalent to $\omega_{\Phi}$ in general.
However, an analogous argument as the one-dimensional case lets us expect
that the effect of $\beta^{U}_{g}$ should be localized around the $x$-axis.
In fact, it turns out to be true mathematically.
\begin{prop}\label{etapro}
For any $0<\theta<\frac\pi 2$, there are
$\eta_{g,L}\in \Aut\lmk \caA_{C_{\theta,L}}\rmk$ and
$\eta_{g,R}\in \Aut\lmk \caA_{C_{\theta,R}}\rmk$ such that
\begin{align*}
\omega_{\Phi}\circ \beta_g^U
\simeq
\omega_{\Phi}\lmk \eta_{g,L}\otimes\eta_{g,R}\rmk.
\end{align*}
\end{prop}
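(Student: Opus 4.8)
The plan is to follow the proof of Proposition~\ref{obd}, but to replace the exact one-dimensional factorization~(\ref{ofac}) by its two-dimensional form~(\ref{tfac}). Since $\Phi\in\caP_{0,\beta}\subset\caP_0$, the automorphic equivalence Theorem~\ref{automorphic} provides $\omega_\Phi=\omega_{\Phi_0}\circ\alpha$ with $\alpha\in\QAut(\caA_{\bbZ^2})$ and $\omega_{\Phi_0}$ the pure infinite product ground state of a trivial interaction. Splitting $\omega_{\Phi_0}=\omega_L\otimes\omega_R$ across the $y$-axis and applying~(\ref{tfac}) with the given $\theta$ gives $\alpha=\Ad(v)(\alpha_L\otimes\alpha_R)\circ\Theta$ with $\alpha_L\in\Aut\caA_{H_L}$, $\alpha_R\in\Aut\caA_{H_R}$, $\Theta\in\Aut\caA_{(C_\theta)^c}$. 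Discarding the inner $\Ad(v)$, which does not change the equivalence class,
\[
\omega_\Phi\simeq\lmk(\omega_L\circ\alpha_L)\otimes(\omega_R\circ\alpha_R)\rmk\circ\Theta .
\]

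Next I exploit that $\beta^U_g$ is on-site. It therefore factorizes across the $y$-axis as $\beta^{U,L}_g\otimes\beta^{U,R}_g$, and a support-preserving conjugation gives $\Theta\circ\beta^U_g=\beta^U_g\circ\Theta'$ with $\Theta'\in\Aut\caA_{(C_\theta)^c}$, whence
\[
\omega_\Phi\circ\beta^U_g\simeq\lmk(\omega_L\alpha_L\beta^{U,L}_g)\otimes(\omega_R\alpha_R\beta^{U,R}_g)\rmk\circ\Theta' .
\]
The crucial structural observation is that $C_{\theta,L},C_{\theta,R}\subset C_\theta$ are disjoint from $(C_\theta)^c$, so any prospective $\eta_{g,L}\in\Aut\caA_{C_{\theta,L}}$ and $\eta_{g,R}\in\Aut\caA_{C_{\theta,R}}$ commute with $\Theta$; consequently $\omega_\Phi\circ(\eta_{g,L}\otimes\eta_{g,R})\simeq\lmk(\omega_L\alpha_L\eta_{g,L})\otimes(\omega_R\alpha_R\eta_{g,R})\rmk\circ\Theta$. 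Comparing the two displays, it suffices to produce wedge automorphisms with $\omega_R\alpha_R\beta^{U,R}_g\simeq\omega_R\alpha_R\eta_{g,R}$ on $\caA_{H_R}$ and its left analogue, and then to reconcile the boundary automorphisms $\Theta,\Theta'$ supported on $(C_\theta)^c$; the $\beta$-invariance $\omega_\Phi\circ\beta_g=\omega_\Phi$ enters here exactly as it did in Proposition~\ref{obd}.

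The construction of $\eta_{g,R}$ is the main obstacle, and is where the argument departs from one dimension. The natural move is to factorize the half-plane automorphism $\alpha_R\in\QAut\caA_{H_R}$ once more, now across the positive $x$-axis, so that its boundary error is localized in the wedge $C_{\theta,R}$ and the on-site symmetry $\beta^{U,R}_g$ is confined to the upper factor; one then hopes to read $\eta_{g,R}$ off from this error. The difficulty is that, unlike the exact one-dimensional factorization~(\ref{ofac}), the two-dimensional error straddles the very cut across which one wants to factor the equivalence of pure product states, so the factorization-of-equivalence argument available in one dimension cannot be invoked directly. The technical heart is therefore to control this cone error quantitatively, using the Lieb-Robinson bounds underlying~(\ref{tfac}) together with the purity of the states (so that equivalence and quasi-equivalence coincide) and the fact that $\Theta$ acts trivially on $\caA_{C_\theta}$, in order to deform the half-plane symmetry action onto the wedge $C_{\theta,R}$ with a controlled localized remainder.
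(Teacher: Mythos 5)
Your formal manipulations are all correct as far as they go: the use of Theorem \ref{automorphic}, discarding the inner term $\Ad(v)$, the identity $\Theta\circ\beta^U_g=\beta^U_g\circ\Theta'$ with $\Theta'=(\beta^U_g)^{-1}\circ\Theta\circ\beta^U_g\in\Aut\lmk\caA_{(C_\theta)^c}\rmk$, and the commutation of prospective wedge automorphisms with $\Theta$. But what you have written is a reduction, not a proof, and the reduction is essentially circular. The two steps you leave open --- producing $\eta_{g,R}\in\Aut\lmk\caA_{C_{\theta,R}}\rmk$ with $\omega_R\alpha_R\beta^{U,R}_g\simeq\omega_R\alpha_R\eta_{g,R}$, and ``reconciling'' $\Theta$ with $\Theta'$ --- carry the entire content of Proposition \ref{etapro}. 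The half-plane problem is not easier than the original one: the boundary separating $H_U\cap H_R$ from $H_D\cap H_R$ is still an infinite half-line, so the problem has not become one-dimensional; and the resource that made the proof of Proposition \ref{obd} work --- splitting the $\beta$-invariance of $\omega_\Phi$ cleanly across the cut, which is possible in one dimension because (\ref{ofac}) has only an inner error --- is unavailable here for exactly the reason you identify: the error automorphism straddles the cut. Even the auxiliary statement $\omega_R\alpha_R\beta^{R}_g\simeq\omega_R\alpha_R$, which in one dimension follows immediately from purity, is obstructed by $\Theta$. A proof cannot stop at the point where the only remaining step is the theorem's actual content; note that the paper itself offers no proof of this proposition but quotes it from \cite{2dSPT}, precisely because that step is the hard part.

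Two ideas are missing, both concerning the choice and the fine structure of the factorization. First, the cut adapted to $\beta^U_g$ is the horizontal one, not the vertical one you use: factorizing $\alpha$ into automorphisms of $\caA_{H_U}$ and $\caA_{H_D}$ (the paper remarks that $\alpha$ can be cut in many directions simultaneously) places the boundary error in a double cone around the $x$-axis, i.e.\ inside the region $C_{\theta}$ where $\eta_{g,L}\otimes\eta_{g,R}$ must live, so it has a chance of being absorbed into the $\eta$'s; your vertical cut places the error in $(C_\theta)^c$, disjoint from the wedges, which is why it survives as an unresolved ``reconciliation''. Second, that horizontal-cut error is not an arbitrary element of $\Aut\lmk\caA_{C_\theta}\rmk$: it is itself quasi-local, generated by a time-dependent interaction concentrated along the $x$-axis, and because $C_{\theta,L}$ and $C_{\theta,R}$ are far apart except near the origin, it admits a further factorization into (left wedge automorphism) $\otimes$ (right wedge automorphism) modulo an inner term. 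It is this finer splitting --- combined with the $\beta$-invariance of $\omega_\Phi$, the fact that conjugation by the on-site $\beta^U_g$ preserves all localization regions, and the purity argument you invoke --- that allows the discrepancy between $\omega_\Phi\circ\beta^U_g$ and $\omega_\Phi$ to be pushed into $C_{\theta,L}$ and $C_{\theta,R}$. Without some such additional structural input on the error automorphism, no amount of rearranging the identity (\ref{tfac}) will produce the $\eta$'s, so the gap in your proposal is genuine rather than merely technical.
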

It means macroscopically, the effect of $\beta^{U}_{g}$ on $\omega_{\Phi}$
is localized around $C_{\theta,L}$ and $C_{\theta,R}$ for any $0<\theta<\frac\pi 2$.
This $\eta_{g,R}$ is our source of $H^{3}(G,\Uo)$-valued index.

Now we fix some $0<\theta<\frac\pi 2$, and set
 $\gamma_{g}^{R}:= \beta_g^{UR}\circ\eta_{g,R}^{-1}$, 
 $\gamma_{g}^{L}:= \beta_g^{UL}\circ\eta_{g,L}^{-1}$
with $\eta_{g,R}$, $\eta_{g,L}$ for this $\theta$.
Here, $\beta_g^{UR}$, $\beta_g^{UL}$
are group actions of $G$ given by
\[
\beta_g^{UR}:=\id_{\lmk H_U\cap H_{R}\rmk^{c}}
\otimes \bigotimes_{(x,y)\in H_U\cap H_{R}}
\Ad\lmk U(g)\rmk,\quad
\beta_g^{UL}:=\id_{\lmk H_U\cap H_{L}\rmk^{c}}
\otimes \bigotimes_{(x,y)\in H_U\cap H_{L}}
\Ad\lmk U(g)\rmk.
\]
From Proposition \ref{etapro},
we have 
\begin{align}\label{ginv}
\omega_{\Phi}\circ\lmk\gamma_{g}^{L}\otimes \gamma_{g}^{R}\rmk\simeq \omega_{\Phi},\quad g\in G.
\end{align}

On the other hand, recall from the definition that $\Phi\in\caP_{0}$
means $\Phi\sim\Phi_{0}$ with some trivial interaction $\Phi_{0}$.
From Theorem \ref{automorphic}, we have $\omega_{\Phi}=\omega_{\Phi_{0}}\circ\alpha$,
with $\alpha\in\QAut(\caA_{\bbZ^\nu})$
satisfying the factorization property, i.e.,
\begin{align}
\alpha=\Ad(v)\circ\lmk\alpha_{L}\otimes\alpha_{R}\rmk\circ\Theta,\quad
\alpha_{L}\in\Aut\caA_{H_{L}},\quad \alpha_{R}\in\Aut\caA_{H_{R}},\quad 
\Theta\in\Aut\caA_{C_{\theta}^{c}},
\end{align}
for our fixed $\theta$.
Recall that as a trivial interaction, $\Phi_{0}$ has a unique ground state $\omega_{\Phi_0}$ of infinite tensor product form.
In particular, we can write $\omega_{\Phi_{0}}$ as $\omega_{\Phi_{0}}=\omega_{L}\otimes \omega_{R}$
with pure states $\omega_{L}$, $\omega_{R}$ on $\caA_{H_{L}}$, $\caA_{H_{R}}$,
respectively.
Combining these, we conclude that 
\begin{align}\label{decompo}
\omega_{\Phi}\simeq \lmk \omega_{L}\otimes \omega_{R}\rmk\circ
\lmk\alpha_{L}\otimes\alpha_{R}\rmk\circ\Theta.
\end{align}

Repeated use of (\ref{ginv}) gives 
\begin{align}
\omega_{\Phi}\circ
\lmk\gamma_{g}^{L}\gamma_{h}^{L}\lmk \gamma_{gh}^{L}\rmk^{-1}\otimes 
\gamma_{g}^{R}\gamma_{h}^{R}\lmk \gamma_{gh}^{R}\rmk^{-1}\rmk\simeq \omega_{\Phi}.
\end{align}
Applying (\ref{decompo}) to this, we obtain
\begin{align}\label{ttt2}
\lmk \omega_{L}\otimes \omega_{R}\rmk\circ
\lmk\alpha_{L}\otimes\alpha_{R}\rmk\circ\Theta\circ
\lmk\gamma_{g}^{L}\gamma_{h}^{L}\lmk \gamma_{gh}^{L}\rmk^{-1}\otimes 
\gamma_{g}^{R}\gamma_{h}^{R}\lmk \gamma_{gh}^{R}\rmk^{-1}\rmk
\simeq
\lmk \omega_{L}\otimes \omega_{R}\rmk\circ
\lmk\alpha_{L}\otimes\alpha_{R}\rmk\circ\Theta.
\end{align}
 Note that 
 \begin{align}
 \gamma_{g}^{R}\gamma_{h}^{R}\lmk \gamma_{gh}^{R}\rmk^{-1}
 =\lmk \beta_g^{UR}\eta_{g,R}^{-1}\lmk \beta_g^{UR}\rmk^{-1}\rmk
 \lmk \beta_{gh}^{UR}\eta_{h,R}^{-1}\eta_{gh,R}\lmk \beta_{gh}^{UR}\rmk^{-1}\rmk
 \in \Aut\lmk\caA_{C_{\theta,R}}\rmk.
 \end{align}
 Similarly, we have $\gamma_{g}^{L}\gamma_{h}^{L}\lmk \gamma_{gh}^{L}\rmk^{-1}\in \Aut\lmk\caA_{C_{\theta,L}}\rmk$.
 Therefore, they commute with $\Theta\in \Aut\lmk\caA_{C_{\theta}^{c}}\rmk$.
 From this and (\ref{ttt2}), we obtain
 \[
 \lmk \omega_{L}\otimes \omega_{R}\rmk\circ
\lmk\alpha_{L}\otimes\alpha_{R}\rmk\circ
\lmk\gamma_{g}^{L}\gamma_{h}^{L}\lmk \gamma_{gh}^{L}\rmk^{-1}\otimes 
\gamma_{g}^{R}\gamma_{h}^{R}\lmk \gamma_{gh}^{R}\rmk^{-1}\rmk
\simeq
\lmk \omega_{L}\otimes \omega_{R}\rmk\circ
\lmk\alpha_{L}\otimes\alpha_{R}\rmk,
\]
 which implies
 \begin{align}
 \omega_{R}\alpha_{R}\gamma_{g}^{R}\gamma_{h}^{R}\lmk \gamma_{gh}^{R}\rmk^{-1}
 \simeq \omega_{R}\alpha_{R}.
 \end{align}
 Recall from section \ref{quantumspinsec},
 this means that the automorphism 
 $\gamma_{g}^{R}\gamma_{h}^{R}\lmk \gamma_{gh}^{R}\rmk^{-1}$
 is implementable by a unitary $u(g,h)$
 in the GNS representation $(\caH_{R},\pi_{R})$ of $ \omega_{R}\alpha_{R}$ i.e.,
 \begin{align}\label{uimple}
 \Ad\lmk u(g,h)\rmk\pi_{R}=\pi_{R}\gamma_{g}^{R}\gamma_{h}^{R}\lmk \gamma_{gh}^{R}\rmk^{-1}.
 \end{align}
 
 Note also that (\ref{decompo}) and (\ref{ginv}) implies 
 \begin{align}
\lmk \omega_{L}\otimes \omega_{R}\rmk\circ
\lmk\alpha_{L}\otimes\alpha_{R}\rmk\circ\Theta\circ\lmk\gamma_{g}^{L}\otimes \gamma_{g}^{R}\rmk\simeq
\lmk \omega_{L}\otimes \omega_{R}\rmk\circ
\lmk\alpha_{L}\otimes\alpha_{R}\rmk\circ\Theta.
 \end{align}
 Therefore, with $(\caH_{L},\pi_{L})$ a GNS representation of $ \omega_{L}\alpha_{L}$,
 there is a unitary $W_{g}$ on $\caH_{L}\otimes \caH_{R}$
 implementing $\Theta\circ\lmk\gamma_{g}^{L}\otimes \gamma_{g}^{R}\rmk\circ\Theta^{-1}$
 in the GNS representation $(\caH_{L}\otimes \caH_{R},\pi_{L}\otimes \pi_{R})$
 of $ \omega_{L}\alpha_{L}\otimes \omega_{R}\alpha_{R}$, i.e.,
 \begin{align}\label{wimple}
 \Ad\lmk W_{g}\rmk\lmk \pi_{L}\otimes \pi_{R}\rmk
 =\lmk \pi_{L}\otimes \pi_{R}\rmk \circ \Theta\circ\lmk\gamma_{g}^{L}\otimes \gamma_{g}^{R}\rmk\circ\Theta^{-1}.
 \end{align}
 
 For these $u(g,h)$ (\ref{uimple}) and $W_{g}$ (\ref{wimple}),
 we claim there are $c(g,h,k)\in \Uo$
 such that
 \begin{align}\label{3coc}
 \Ad\lmk W_{g}\rmk\lmk\unit_{L}\otimes u(h,k)\rmk\cdot \lmk\unit_{L}\otimes u(g,hk)\rmk
 =c(g,h,k)\lmk\unit_{L}\otimes u(g,h)u(gh,k)\rmk,\quad g,h,k\in G.
 \end{align}
 To see this, consider $\pi_{L}\otimes\pi_{R}\gamma_{g}^{R}\gamma_{h}^{R}\gamma_{k}^{R}$.
 On the one hand,  
 with the repeated use of (\ref{uimple}), we have
 \begin{align}\label{oohd}
 \pi_{L}\otimes\pi_{R}\gamma_{g}^{R}\gamma_{h}^{R}\gamma_{k}^{R}
 =\Ad\lmk\unit_{L}\otimes u(g,h)\rmk
 \lmk \pi_{L}\otimes\pi_{R}\gamma_{gh}^{R}\gamma_{k}^{R}\rmk
  =\Ad\lmk\unit_{L}\otimes u(g,h)u(gh,k)\rmk
 \lmk \pi_{L}\otimes\pi_{R}\circ\gamma_{ghk}^{R}\rmk.
 \end{align}
 On the other hand, 
 note that  both of $ \gamma_{h}^{R}\gamma_{k}^{R}\lmk \gamma_{hk}^{R}\rmk^{-1}$ and 
$\gamma_{g}^{R}\lmk \gamma_{h}^{R}\gamma_{k}^{R}\lmk \gamma_{hk}^{R}\rmk^{-1}\rmk\lmk \gamma_{g}^{R}\rmk^{-1}$
commute with $\Theta$ as before.
Hence we have 
\begin{align}
\id_{L}\otimes
\gamma_{g}^{R}\lmk \gamma_{h}^{R}\gamma_{k}^{R}\lmk \gamma_{hk}^{R}\rmk^{-1}\rmk\lmk \gamma_{g}^{R}\rmk^{-1}
=\Theta\lmk
\gamma_{g}^{L}\otimes \gamma_{g}^{R}\rmk\Theta^{-1}
\lmk \id_{L}\otimes 
\gamma_{h}^{R}\gamma_{k}^{R}\lmk \gamma_{hk}^{R}\rmk^{-1}\rmk
\Theta\lmk \gamma_{g}^{L}\otimes {\gamma_{g}^{R}}\rmk^{-1}\Theta^{-1}.
\end{align}
From this and repeated use of (\ref{uimple}), (\ref{wimple}), we have
 \begin{align}
 \begin{split}
 &\pi_{L}\otimes\pi_{R}\gamma_{g}^{R}\gamma_{h}^{R}\gamma_{k}^{R}\\
& =\lmk \pi_{L}\otimes\pi_{R} \rmk
 \Theta\lmk
\gamma_{g}^{L}\otimes \gamma_{g}^{R}\rmk\Theta^{-1}
\lmk 
\id_{L}\otimes\gamma_{h}^{R}\gamma_{k}^{R}\lmk \gamma_{hk}^{R}\rmk^{-1}\rmk
\Theta\lmk \gamma_{g}^{L}\otimes {\gamma_{g}^{R}}\rmk^{-1}\Theta^{-1}
\lmk \id_{L}\otimes \gamma_{g}^{R}\gamma_{hk}^{R}\rmk
 \\
&=\Ad\lmk W_{g} \lmk \unit_{L}\otimes u(h,k)\rmk W_{g}^{*}\lmk
\unit_{L}\otimes u(gh,k)\rmk\rmk\lmk \pi_{L}\otimes\pi_{R}\gamma_{ghk}^{R} \rmk.
\end{split}
 \end{align}
 Comparing this and (\ref{oohd}), we have
 \begin{align}\label{cocoa}
 \Ad\lmk\unit_{L}\otimes u(g,h)u(gh,k)\rmk
 \lmk \pi_{L}\otimes\pi_{R}\rmk
 =\Ad\lmk W_{g} \lmk \unit_{L}\otimes u(h,k)\rmk W_{g}^{*}\lmk
\unit_{L}\otimes u(gh,k)\rmk\rmk\lmk \pi_{L}\otimes\pi_{R} \rmk.
 \end{align}
 Note, because $(\caH_{L}\otimes \caH_{R},\pi_{L}\otimes \pi_{R})$
 is a GNS representation of a pure state $ \omega_{L}\alpha_{L}\otimes \omega_{R}\alpha_{R}$,
 $\lmk \pi_{L}\otimes \pi_{R}\rmk\lmk\caA_{\bbZ^{2}}\rmk$ is dense in
 $\caB\lmk \caH_{L}\otimes \caH_{R}\rmk$ with the strong operator topology.
 As a result, (\ref{cocoa}) implies our claim (\ref{3coc}).
 
 The situation in (\ref{uimple}),  (\ref{3coc}) is pretty much similar to that of cocycle actions
  \cite{Connes},\cite{jones}.
 In fact, following the argument in \cite{jones},
 we can show that $c(g,h,k)$ satisfies the $3$-cocycle relation.
 Hence, out of it, we obtain a $H^{3}(G,\Uo)$-valued index.
 Using the automorphic equivalence Theorem \ref{automorphic} and the factorization property
of the automorphism therein,
one can show that it is in fact an invariant of our classification $\sim_{\beta}$.

A derivation of indices for {SPT}-phases was
initially carried out in {tensor network} models, matrix product states
{MPS}, \cite{po}, \cite{po2}, \cite{Perez-Garcia2008} in one dimension, and 
projected entangled pair states \cite{molnar}.
Our indices coincide with theirs in those models.
In other words, thanks to those works, there are many examples.
Our approach introduced in this section is operator
algebraic.
Recently, some quantum information based approach are
reported \cite{ksy}\cite{so}.
 
\section{Anyons in topological phases}
In this section, we consider the classification $\sim_{\mathrm l.u.}$ in two-dimension.
Recall that states which are equivalent to an infinite tensor product state 
with respect to $\sim_{\mathrm l.u.}$ are said to have a short-range entanglement, and otherwise
it is said to have a long-range entanglement.
It is frequently said that in the two dimensional systems,
the existence of ``anyon'' means the long-range entanglement of the state.
In this section, we formulate this statement in our operator algebraic setting.

An anyon is a string-like excitation with a braiding structure.
How to formulate an anyon mathematically is a non-trivial question of mathematical physics.
Our answer, motivated by AQFT and studies of Kitaev models \cite{N1}\cite{N2}\cite{FN}\cite{CNN1}
 is that it is a superselection sector.
 It is defined in terms of cones.
By a cone, we mean a subset of $\bbZ^{2}$ of the form
\begin{align*}
\Lambda_{\bm a, \theta,\varphi}
:=&\left\{
\bm x\in\bbZ^{2}\mid (\bm x-\bm a)\cdot \bm e_{\theta}>\cos\varphi\cdot \lV \bm x-{\bm a}\rV
\right\},
\end{align*}
with some $\bm a\in \bbR$, $\theta\in\bbR$ and $\varphi\in (0,\pi)$.
Here we set $\bm e_\theta:=(\cos\theta,\sin\theta)$.
For a cone $\Lambda:=\Lambda_{\bm a, \theta,\varphi}$ 
and $\bm b\in \bbR^2$, $\varepsilon>0$, we set 
$\Lambda_{\varepsilon}+\bm b:=\Lambda_{\bm a+\bm b, \theta,\varphi+\varepsilon}$,
$|\arg\Lambda|:=2\varphi$
and $\bm e_{\Lambda}:=\bm e_\theta$.
\begin{defn}
Let $(\caH,\pi_0)$ be an irreducible representation of $\caA_{\bbZ^2}$.
We say a representation $\pi$ of $\caA_{\bbZ^2}$ on $\caH$
satisfies the superselection criterion for $\pi_0$ if 
\[
\pi\vert_{\caA_{\Lambda^c}}\simeq_{u.e.}
 \pi_0\vert_{{\caA_{\Lambda^c}}},
\]
for any cone $\Lambda$ in $\bbZ^2$.
(Here, $\simeq_{u.e.}$ means that the two representations are unitarily equivalent.)
Such representations are called superselection sectors for $\pi_0$.
\end{defn}
Super selection sectors are objects studied extensively in AQFT.
In the context of quantum spin systems, P. Naaijkens and his coauthors
carried out studies on 
Kitaev's quantum double model 
from the point of view of superselection sectors \cite{N1}\cite{N2}\cite{FN}\cite{CNN1},
where they drove braiding structure.

We can see the importance of the sector theory for us from the fact that it is an invariant of $\sim_{\mathrm l.u.}$.
 \begin{thm}\label{thm:invariant}\cite{NaOg}
Let $(\caH,\pi_0)$ be an irreducible representation
and let $\alpha\in \QAut(\caA_{\bbZ^{2}})$.
Suppose that a representation $\pi$ satisfies the superselection criterion
for $\pi_0$. Then  $\pi \circ \alpha$  satisfies the superselection criterion
for $\pi_0 \circ \alpha$
 \end{thm}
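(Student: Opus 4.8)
The plan is to verify the superselection criterion for $\pi\circ\alpha$ directly, one cone at a time, by transporting the intertwiners that implement $\pi\simeq_{u.e.}\pi_0$ through the factorization of $\alpha$. Fix an arbitrary cone $\Lambda$; the goal is to produce a unitary $W$ on $\caH$ with $W\,\pi(\alpha(A))\,W^*=\pi_0(\alpha(A))$ for every $A\in\caA_{\Lambda^c}$.

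First I would tame the quasi-local spreading of $\alpha$ using the factorization property (\ref{tfac}), cutting along the two boundary rays of a cone $\Lambda_-$ chosen strictly inside $\Lambda$, with the buffer $\Lambda\setminus\Lambda_-$ kept away from $\Lambda^c$. This yields a decomposition $\alpha=\Ad(V)\circ\lmk\alpha_{\mathrm{in}}\otimes\alpha_{\mathrm{out}}\rmk\circ\Theta$ with a unitary $V\in\caA_{\bbZ^2}$, with $\alpha_{\mathrm{in}}$ supported well inside $\Lambda$, with $\alpha_{\mathrm{out}}\in\Aut\caA_{\Lambda_-^c}$, and with the boundary automorphism $\Theta$ supported in the buffer, hence inside $\Lambda$. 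Because the supports of both $\alpha_{\mathrm{in}}$ and $\Theta$ are disjoint from $\Lambda^c$, for $A\in\caA_{\Lambda^c}\subset\caA_{\Lambda_-^c}$ both act trivially, and I obtain the exact identity $\alpha(A)=V\,\alpha_{\mathrm{out}}(A)\,V^*$ with $B:=\alpha_{\mathrm{out}}(A)\in\caA_{\Lambda_-^c}$. The content of this step is that factorization replaces the merely quasi-local $\alpha$ by an honestly cone-supported automorphism at the price of the inner $\Ad(V)$.

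Next I would invoke the hypothesis for the cone $\Lambda_-$: since $\pi$ satisfies the superselection criterion for $\pi_0$, there is a unitary $U$ on $\caH$ with $U\,\pi(C)\,U^*=\pi_0(C)$ for all $C\in\caA_{\Lambda_-^c}$, in particular $U\,\pi(B)\,U^*=\pi_0(B)$. Combining this with $\alpha(A)=V B V^*$ gives, for all $A\in\caA_{\Lambda^c}$,
\[
\pi_0(\alpha(A))=\pi_0(V)\,U\,\pi(V)^*\,\pi(\alpha(A))\,\pi(V)\,U^*\,\pi_0(V)^*,
\]
so that $W:=\pi_0(V)\,U\,\pi(V)^*$ is a unitary on $\caH$ intertwining $(\pi\circ\alpha)\vert_{\caA_{\Lambda^c}}$ and $(\pi_0\circ\alpha)\vert_{\caA_{\Lambda^c}}$. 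Since $\Lambda$ was arbitrary, $\pi\circ\alpha$ satisfies the superselection criterion for $\pi_0\circ\alpha$.

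The main obstacle is the first step: I must ensure that the factorization of $\alpha$ can be arranged relative to the given cone $\Lambda$ so that the boundary piece $\Theta$ and the inner factor $\alpha_{\mathrm{in}}$ are genuinely supported inside $\Lambda$ (disjoint from $\Lambda^c$), while $\alpha_{\mathrm{out}}$ is an honest automorphism of a cone-complement algebra $\caA_{\Lambda_-^c}$. This requires cutting $\alpha$ simultaneously along both boundary rays of $\Lambda$ and absorbing the various boundary errors into $V$, into $\alpha_{\mathrm{in}}$, and into a slight shrinking $\Lambda_-\subset\Lambda$; once the supports are separated exactly, the remaining algebra is routine. A secondary point to check is that $V$ lies in $\caA_{\bbZ^2}$, as in (\ref{ofac}), so that $\pi(V)$ and $\pi_0(V)$ are well defined on $\caH$.
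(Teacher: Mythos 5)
Your proof is correct and is essentially the paper's own argument: both hinge on factorizing $\alpha$ along a cone slightly deformed from the given $\Lambda$, so that the boundary error becomes irrelevant on the region of interest, and then invoking the superselection hypothesis for that nearby cone, conjugating by $\pi_0(V)\,U\,\pi(V)^*$-type unitaries. The only cosmetic difference is directional bookkeeping: you shrink the cone to $\Lambda_-$ and verify the equivalence on $\caA_{\Lambda^c}$ (literally matching the stated definition of the criterion), whereas the paper fattens the cone to $\Lambda_\varepsilon$, places the error term $\Xi$ there, and chases the same chain of unitary equivalences restricted to $\caA_\Lambda$, implicitly using that complements of cones are again (essentially) cones.
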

 Let $\omega_1$, $\omega_0$ be pure states such that 
 $\omega_{1}\sim_{\mathrm l.u.}\omega_{0}$ 
 with $\omega_{1}=\omega_{0}\circ\alpha$, $\alpha\in \QAut\lmk\caA_{\bbZ^{2}}\rmk$.
 Then from Theorem \ref{thm:invariant}, 
  $\alpha$ gives a bijection between the set of all superselection sectors of $\pi_{\omega_{0}}$
 and the set of all superselection sectors of $\pi_{\omega_{1}}$.
 
The proof of \ref{thm:invariant} is a simple argument using the factorization property.
For $\varepsilon>0$,
analogous to (\ref{tfac}),
we have a decomposition 
\begin{align}
\alpha=\Ad\lmk v\rmk\circ\Xi\circ\lmk \alpha_{\Lambda}\otimes \alpha_{\Lambda^{c}}\rmk,
\end{align}
where $\alpha_{\Lambda}$, $\alpha_{\Lambda^{c}}$,
$\Xi$ are automorphisms on $\caA_{\Lambda}$, $\caA_{\Lambda^{c}}$,$\caA_{\Lambda_{\varepsilon}}$,
respectively. (We choose $\varepsilon>0$ small
enough so that $\Lambda_\varepsilon$ is still a cone.)
Then for a superselection sector $\pi$ for $\pi_{0}$, we have
\begin{align}
\pi\circ\alpha\vert_{\caA_{\Lambda}}\sim_{\mathrm u.e.}
\pi\circ \Xi\circ \alpha_{\Lambda}\vert_{\caA_{\Lambda}}
=\pi\vert_{\caA_{\Lambda_{\varepsilon}}}\circ \Xi\circ\alpha_{\Lambda}\vert_{\caA_{\Lambda}}
\sim_{\mathrm u.e.}
\pi_{0}\vert_{\caA_{\Lambda_{\varepsilon}}}\circ \Xi\circ \alpha_{\Lambda}\vert_{\caA_{\Lambda}}
\sim_{\mathrm u.e.}\pi_{0}\circ\alpha\vert_{\caA_{\Lambda}},
\end{align}
proving the claim.

We say that $\pi_0$ has a trivial sector theory if 
any representation satisfying the superselection criterion for $\pi_0$
is quasi-equivalent to $\pi_0$.
Otherwise, we say $\pi_0$ has a non-trivial sector theory.
One can show that for a pure state of infinite tensor product form, its GNS representation
 has a trivial sector theory \cite{NaOg}.
Combing this and Theorem \ref{thm:invariant},
we obtain the following.
\begin{cor}
If a pure state has a short-range entanglement, then
its GNS representation has a trivial sector theory.
\end{cor}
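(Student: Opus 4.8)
The plan is to reduce the statement to the two facts established immediately above: that the GNS representation of an infinite tensor product pure state has a trivial sector theory, and the invariance of the superselection criterion under $\QAut(\caA_{\bbZ^2})$ from Theorem \ref{thm:invariant}. First I would unwind the hypothesis. A pure state $\omega$ with short-range entanglement is of the form $\omega=\omega_0\circ\alpha$ with $\omega_0=\bigotimes_{\bm x}\rho_{\bm x}$ an infinite tensor product state and $\alpha\in\QAut(\caA_{\bbZ^2})$. Since $\omega$ is pure and $\alpha$ is an automorphism, $\omega_0=\omega\circ\alpha^{-1}$ is pure as well, so each $\rho_{\bm x}$ is pure and $\pi_{\omega_0}$ is irreducible. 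Because $\pi_{\omega_0}\circ\alpha$ is a GNS representation of $\omega_0\circ\alpha=\omega$, uniqueness of the GNS triple gives $\pi_\omega\simeq_{u.e.}\pi_{\omega_0}\circ\alpha$; as triviality of the sector theory is a property preserved under unitary equivalence of the base representation, it suffices to work with the representative $\pi_\omega:=\pi_{\omega_0}\circ\alpha$.

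Next I would transport an arbitrary sector back across $\alpha$. Let $\pi$ be any representation satisfying the superselection criterion for $\pi_\omega=\pi_{\omega_0}\circ\alpha$. Since $\QAut(\caA_{\bbZ^2})$ is a subgroup of $\Aut(\caA_{\bbZ^2})$ we have $\alpha^{-1}\in\QAut(\caA_{\bbZ^2})$, and $\pi_\omega$ is irreducible, so Theorem \ref{thm:invariant} applies with base representation $\pi_\omega$ and automorphism $\alpha^{-1}$. It yields that $\pi\circ\alpha^{-1}$ satisfies the superselection criterion for $\pi_\omega\circ\alpha^{-1}=\pi_{\omega_0}$. By the triviality of the sector theory of the product state $\omega_0$, this forces $\pi\circ\alpha^{-1}$ to be quasi-equivalent to $\pi_{\omega_0}$.

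Finally I would push this quasi-equivalence forward through $\alpha$. The key observation is that precomposition with an automorphism preserves quasi-equivalence: if $\sigma_1,\sigma_2$ are representations with a $*$-isomorphism $\iota:\sigma_1(\caA_{\bbZ^2})''\to\sigma_2(\caA_{\bbZ^2})''$ satisfying $\sigma_2=\iota\circ\sigma_1$, then $\alpha(\caA_{\bbZ^2})=\caA_{\bbZ^2}$ gives $(\sigma_i\circ\alpha)(\caA_{\bbZ^2})''=\sigma_i(\caA_{\bbZ^2})''$, so the same $\iota$ intertwines $\sigma_1\circ\alpha$ and $\sigma_2\circ\alpha$. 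Applying this with $\sigma_1=\pi\circ\alpha^{-1}$ and $\sigma_2=\pi_{\omega_0}$ shows that $\pi=(\pi\circ\alpha^{-1})\circ\alpha$ is quasi-equivalent to $\pi_{\omega_0}\circ\alpha=\pi_\omega$. Since $\pi$ was an arbitrary sector for $\pi_\omega$, this is precisely the statement that $\pi_\omega$ has a trivial sector theory.

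I do not expect a genuine obstacle here, since the content is carried entirely by the two inputs cited above. The only point requiring a little care is the elementary but essential verification that quasi-equivalence, and not merely unitary equivalence, is stable under precomposition with $\alpha\in\QAut(\caA_{\bbZ^2})$; this is exactly what lets the bijection of sectors induced by $\alpha$ descend to a bijection of quasi-equivalence classes, so that the single quasi-equivalence class of $\pi_{\omega_0}$ is carried to a single class for $\pi_\omega$.
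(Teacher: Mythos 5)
Your proof is correct and follows exactly the route the paper intends: the corollary is stated there as an immediate consequence of the triviality of the sector theory for pure infinite tensor product states (from \cite{NaOg}) combined with Theorem \ref{thm:invariant}, applied with the automorphism $\alpha^{-1}\in\QAut(\caA_{\bbZ^2})$. The details you supply — purity of $\omega_0$, identification of $\pi_\omega$ with $\pi_{\omega_0}\circ\alpha$ via GNS uniqueness, and stability of quasi-equivalence under precomposition by an automorphism — are precisely the steps the paper leaves implicit.
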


In other words, the existence of non-trivial superselection sectors implies the long-range entanglement.
If we regard superselection sectors as anyons, it is a mathematical
realization of the folklore saying that the existence of
anyons implies long-range entanglement of the state.

The reason why we expect superselection sectors
to be related to anyons comes from AQFT.
Using the tools from AQFT, in \cite{CNN2} Cha-Naaijkens-Nachtergaele
 derived a braiding structure in a general setting of
 semi-group of almost localized endomorphisms in quantum spin systems.
It is well known that anyons show up in AQFT 
surprisingly naturally \cite{BDMRS} \cite{BF} \cite{DHRI}\cite{FRS}\cite{K}\cite{BKLR}.
More precisely, under some condition called Haag duality, 
a braided $C^{*}$-tensor category can be associated to the irreducible representation
with non-trivial sector theory.
The Haag duality is the property
$\pi_0(\caA_{\Lambda^c})'=\pi_0(\caA_{\Lambda})''$,
for all cones $\Lambda$ in $\bbZ^{2}$.

The problem for us about introducing this condition in quantum spin systems
is that it does not look to be plausible that this condition is stable under 
automorphisms in $\QAut(\caA_{\bbZ^2})$.
Recalling that automorphisms in $\QAut(\caA_{\bbZ^2})$
are 
the fundamental operation in the classification problem of gapped ground state phases,
this situation is not convenient for us.
For this reason, we introduce a weaker version of Haag duality.
\begin{defn}\label{assum7}[Approximate Haag duality \cite{MTCo}]
Let $(\caH,\pi_0)$ be an irreducible representation of $\caA_{\bbZ^{2}}$.
We say that $(\caH,\pi_0)$ satisfies the approximate Haag duality if
the following conditions hold.:
For any $\varphi\in (0,2\pi)$ and 
 $\varepsilon>0$ with
$\varphi+4\varepsilon<2\pi$,
there is some $R_{\varphi,\varepsilon}>0$ and decreasing
functions $f_{\varphi,\varepsilon,\delta}(t)$, $\delta>0$
on $\bbR_{\ge 0}$
with $\lim_{t\to\infty}f_{\varphi,\varepsilon,\delta}(t)=0$
such that
\begin{description}
\item[(i)]
for any cone $\Lambda$ with $|\arg\Lambda|=\varphi$, there is a unitary 
$U_{\Lambda,\varepsilon}\in \caU(\caH)$
satisfying
\begin{align}\label{lem7p}
\pi_0\lmk\caA_{\Lambda^c}\rmk'\subset 
\Ad\lmk U_{\Lambda,\varepsilon}\rmk\lmk 
\pi_0\lmk \caA_{\lmk \Lambda-R_{\varphi,\varepsilon}\bm e_\Lambda\rmk_\varepsilon}\rmk''
\rmk,
\end{align}
and 
\item[(ii)]
 for any $\delta>0$ and $t\ge 0$, there is a unitary 
 $\tilde U_{\Lambda,\varepsilon,\delta,t}\in \pi_0\lmk \caA_{\Lambda_{\varepsilon+\delta}-t\bm e_{\Lambda}}\rmk''$
 satisfying
\begin{align}\label{uappro}
\lV
U_{\Lambda,\varepsilon}-\tilde U_{\Lambda,\varepsilon,\delta,t}
\rV\le f_{\varphi,\varepsilon,\delta}(t).
\end{align}
\end{description}
\end{defn}
The good point about this weaker version is that we know it is stable under automorphisms 
in $\QAut(\caA_{\bbZ^2})$.
\begin{prop}\label{staah}
Let $(\caH,\pi_0)$ be an irreducible representation of $\caA_{\bbZ^{2}}$
satisfying the approximate Haag duality.
Then for any automorphism $\alpha\in \QAut(\caA_{\bbZ^2})$,
 $(\caH,\pi_0\circ\alpha)$ also satisfies the approximate Haag duality.
\end{prop}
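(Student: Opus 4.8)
The plan is to transport the data witnessing approximate Haag duality for $\pi_0$ through the quasi-local automorphism $\alpha$, using the factorization property of $\alpha$ and $\alpha^{-1}$ exactly as in the proof of Theorem \ref{thm:invariant}, and to control the norm-approximation in part (ii) of Definition \ref{assum7} by the Lieb--Robinson bound. Write $\tilde\pi:=\pi_0\circ\alpha$, fix a cone $\Lambda$ with $|\arg\Lambda|=\varphi$ and $\varepsilon>0$ with $\varphi+4\varepsilon<2\pi$, and pick auxiliary collar parameters $\varepsilon_1,\varepsilon_2,\varepsilon_3>0$ with $\varepsilon_1+\varepsilon_2+\varepsilon_3\le\varepsilon$. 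First I would record two clean consequences of the factorization property. Factorizing $\alpha$ along $\partial\Lambda$ with collar $\varepsilon_1$ gives $\alpha=\Ad(v)\circ\Xi\circ\lmk\alpha_\Lambda\otimes\alpha_{\Lambda^c}\rmk$ with $\Xi\in\Aut\caA_{\Lambda_{\varepsilon_1}}$ and a unitary $v\in\caA_{\bbZ^2}$ localized near $\partial\Lambda$; since $\alpha_{\Lambda^c}$ maps $\caA_{\Lambda^c}$ onto itself and $\Xi$ fixes $\caA_{\lmk\Lambda_{\varepsilon_1}\rmk^c}$ pointwise, one gets $\alpha\lmk\caA_{\Lambda^c}\rmk\supseteq\Ad(v)\caA_{\lmk\Lambda_{\varepsilon_1}\rmk^c}$, hence after taking commutants
\[
\tilde\pi\lmk\caA_{\Lambda^c}\rmk'\subseteq\Ad\lmk\pi_0(v)\rmk\lmk\pi_0\lmk\caA_{\lmk\Lambda_{\varepsilon_1}\rmk^c}\rmk'\rmk.
\]
Dually, factorizing $\alpha^{-1}$ along the boundary of a shifted enlarged cone $E:=\lmk\Lambda_{\varepsilon_1}-R'\bm e_\Lambda\rmk_{\varepsilon_2}$ with collar $\varepsilon_3$ produces a unitary $w$ with $\alpha^{-1}\lmk\caA_E\rmk\subseteq\Ad(w)\caA_{E_{\varepsilon_3}}$.

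Next I would establish (i). Apply the approximate Haag duality of $\pi_0$ to the cone $\Lambda_{\varepsilon_1}$, whose opening angle $\varphi+2\varepsilon_1$ still satisfies the angular budget, with parameter $\varepsilon_2$; this yields the unitary $U_{\Lambda_{\varepsilon_1},\varepsilon_2}$, the radius $R'=R_{\varphi+2\varepsilon_1,\varepsilon_2}$, and
\[
\pi_0\lmk\caA_{\lmk\Lambda_{\varepsilon_1}\rmk^c}\rmk'\subseteq\Ad\lmk U_{\Lambda_{\varepsilon_1},\varepsilon_2}\rmk\lmk\pi_0\lmk\caA_E\rmk''\rmk.
\]
Choosing the target cone $D:=\lmk\Lambda-R^{(\alpha)}\bm e_\Lambda\rmk_\varepsilon$ with $R^{(\alpha)}$ large enough that $E_{\varepsilon_3}\subseteq D$ — possible since $|\arg E_{\varepsilon_3}|=\varphi+2(\varepsilon_1+\varepsilon_2+\varepsilon_3)\le\varphi+2\varepsilon=|\arg D|$ and the backward shift of the apex compensates the lateral offset of the rays — the $\alpha^{-1}$-factorization gives $\caA_E\subseteq\Ad\lmk\alpha(w)\rmk\alpha\lmk\caA_D\rmk$, i.e.
\[
\pi_0\lmk\caA_E\rmk''\subseteq\Ad\lmk\pi_0(\alpha(w))\rmk\lmk\tilde\pi\lmk\caA_D\rmk''\rmk.
\]
Concatenating the three inclusions and setting $U^{(\alpha)}_{\Lambda,\varepsilon}:=\pi_0(v)\,U_{\Lambda_{\varepsilon_1},\varepsilon_2}\,\pi_0(\alpha(w))$ and $R^{(\alpha)}_{\varphi,\varepsilon}:=R^{(\alpha)}$ verifies part (i) of Definition \ref{assum7} for $\tilde\pi$.

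For part (ii) I would approximate $U^{(\alpha)}_{\Lambda,\varepsilon}$ in norm, factor by factor, by unitaries in $\tilde\pi\lmk\caA_{\Lambda_{\varepsilon+\delta}-t\bm e_\Lambda}\rmk''=\pi_0\lmk\alpha\lmk\caA_{\Lambda_{\varepsilon+\delta}-t\bm e_\Lambda}\rmk\rmk''$. The middle factor $U_{\Lambda_{\varepsilon_1},\varepsilon_2}$ is handled directly by part (ii) of the approximate Haag duality of $\pi_0$, which localizes it into a $\pi_0$-cone algebra of a shifted cone with error $f_{\varphi+2\varepsilon_1,\varepsilon_2,\delta'}(t)\to0$. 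The outer factors $\pi_0(v)$ and $\pi_0(\alpha(w))$ are handled by the Lieb--Robinson bound: $v$ and $w$ are localized near the boundary rays of $\Lambda$ and $E$, so truncating them to radius $\sim t$ about those rays and applying $\alpha$ yields local approximants $\alpha(w_n)\in\alpha\lmk\caA_S\rmk$ with $S\subseteq\Lambda_{\varepsilon+\delta}-t\bm e_\Lambda$ and error controlled by the Lieb--Robinson tail. Summing the three decaying errors defines the required decreasing function $f^{(\alpha)}_{\varphi,\varepsilon,\delta}(t)\to0$.

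The hard part will be precisely the bookkeeping of this last step. The natural localization estimates deliver unitaries localized in $\pi_0$-cone algebras, whereas Definition \ref{assum7}(ii) for $\tilde\pi$ demands localization in the $\tilde\pi$-cone algebras $\pi_0\lmk\alpha\lmk\caA_\bullet\rmk\rmk''$; bridging the two requires one more application of the factorization of $\alpha^{\pm1}$ for each factor, each time introducing a further conjugating unitary and a further enlargement of the cone. I would need to check that after finitely many such steps all conjugators localize into one common shifted cone $\Lambda_{\varepsilon+\delta}-t\bm e_\Lambda$, that the cumulative enlargement stays within the angular budget $\varepsilon_1+\varepsilon_2+\varepsilon_3\le\varepsilon$, and that the radius $R^{(\alpha)}$ and the Lieb--Robinson length remain compatible as $t\to\infty$. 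Keeping the resulting decay function uniform over all cones $\Lambda$ with $|\arg\Lambda|=\varphi$ is the delicate analytic point on which the whole argument rests.
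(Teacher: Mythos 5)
First, a point of reference: this survey states Proposition \ref{staah} without proof (it is imported from \cite{MTCo}), so your attempt has to be judged against the argument in that reference. Your treatment of part (i) of Definition \ref{assum7} is correct and is essentially the standard route: factorize $\alpha$ along $\Lambda$ to obtain $\tilde\pi(\caA_{\Lambda^c})'\subseteq \Ad(\pi_0(v))\bigl(\pi_0(\caA_{(\Lambda_{\varepsilon_1})^c})'\bigr)$, apply the approximate Haag duality of $\pi_0$ to the widened cone $\Lambda_{\varepsilon_1}$, and return to $\tilde\pi$-cone algebras by factorizing $\alpha^{-1}$ along $E$. Your angular budget $\varepsilon_1+\varepsilon_2+\varepsilon_3\le\varepsilon$, the uniformity of $R^{(\alpha)}$ in $\Lambda$, and the backward-shift containment $E_{\varepsilon_3}\subseteq D$ all check out.

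The genuine gap is in part (ii), and it is not, as you put it, ``bookkeeping''; it concentrates in two places. First, you assert that $v$ and $w$ are ``localized near the boundary rays'' and can be truncated via the Lieb--Robinson bound. The factorization property available here gives only $v\in\caA_{\bbZ^2}$: these unitaries are boundary-correction terms, not time-evolved local observables, and the Lieb--Robinson bound does not by itself localize them. What is actually required is a quantitative factorization property --- that $v$ and $w$ are norm-approximated by unitaries supported in truncated cone neighborhoods, with error decaying in the truncation scale, uniformly over all cones of aperture $\varphi$ --- and this has to be \emph{proved} for elements of $\QAut(\caA_{\bbZ^2})$ from the generating interactions (in \cite{MTCo} it is built into the hypotheses on the automorphism class); you assume it. Second, your factor-by-factor scheme closes only if all three factors are approximable in the \emph{same} algebra $\tilde\pi(\caA_{\Lambda_{\varepsilon+\delta}-t\bm e_\Lambda})''$. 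For the middle factor, part (ii) of the approximate Haag duality of $\pi_0$ produces $\tilde U_t\in\pi_0(\caA_{C_t})''$, a $\pi_0$-cone algebra, and your proposed bridge (one more factorization of $\alpha^{-1}$ along $C_t$) yields only $\pi_0(\caA_{C_t})''\subseteq\Ad(\pi_0(\alpha(w_t)))\bigl(\tilde\pi(\caA_{(C_t)_{\varepsilon''}})''\bigr)$. This places the conjugate $\pi_0(\alpha(w_t))^*\,\tilde U_t\,\pi_0(\alpha(w_t))$ in the target, not $\tilde U_t$ itself, and $\pi_0(\alpha(w_t))$ is neither close to $\unit$ nor visibly in the target algebra, so iterating the move need not terminate. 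The observation that closes the loop --- absent from your sketch --- is that once $w_t$ is (approximately) supported in $(C_t)_{\varepsilon''}\subseteq\Lambda_{\varepsilon+\delta}-t\bm e_\Lambda$, then $\alpha(w_t)\in\alpha(\caA_{\Lambda_{\varepsilon+\delta}-t\bm e_\Lambda})$, i.e.\ the conjugating unitary itself lies in the target algebra and can be absorbed. Both defects therefore reduce to the same missing ingredient: uniform decay estimates for localizing the factorization unitaries. Since part (ii) is an essential half of Definition \ref{assum7}, what you have is a program for the proof, not a proof.
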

It turns out that even with this weaker version of Haag duality and 
the setting of gapped ground state phases
(which is different from that of AQFT), we can still derive a braided $C^{*}$-tensor category
(see \cite{NT} for the definition)
out of superselection sectors where, unlike endomorphisms, the multiplication rule is not apriori given
\cite{MTCo}.
The proof is a modification of the argument in AQFT and some additional argument using the gap condition
Definition \ref{gapint}.
More precisely, let $\Phi$ be a uniformly bounded finite range interaction on $\caA_{\bbZ^{2}}$ with
gapped ground states.
Let $\omega$ be a pure $\tau_\Phi$-ground state with a GNS representation
$(\caH,\pi_0,\Omega)$.
We assume that $\pi_0$ has a non-trivial sector theory, and
$\pi_{0}$ satisfies the approximate Haag duality.
Fix some $\theta\in\bbR$ and $\varphi\in (0,\pi)$ and denote by
$\ctv$ the set of all cones whose angle does not intersects with $[\theta-\varphi,\theta+\varphi]$.
We set
\begin{align}\label{btvdef}
\caB_{(\theta,\varphi)}:=
\overline{\cup_{\Lambda\in\caC_{(\theta,\varphi)} } \pi_0\lmk\caA_{\Lambda^c}\rmk' }.
\end{align}
Here $\overline{\cdot}$ denotes the norm closure.
Using the approximate Haag duality, using the argument in \cite{BF}, each super selection sectors
$\rho : \caA_{\bbZ^{2}}\to \caB\lmk\caH_{\omega}\rmk$ for
$\pi_0$ extends to an
endomorphism on $\caB_{(\theta,\varphi)}$.
We denote the extension by the same symbol $\rho$.
Via these extensions, we can introduce compositions between superselection sectors.
With this composition as a tensor,
 the superselection sectors of $\pi_0$ are the objects of our braided $C^*$-tensor category.
Our morphisms are given by the intertwiners.
Namely, for objects $\rho,\sigma$,
the morphisms from $\rho$ to $\sigma$ are bounded operators $R$ on $\caH$
such that 
$R\rho(A)=\sigma(A)R$, for any $A\in\caA_{\bbZ^2}$.
The set of all morphisms from $\rho$ to $\sigma$ is denoted by $(\rho,\sigma)$.
Note that $(\rho,\sigma)$ is a Banach space and $(\rho,\rho)$ is a $C^{*}$-algebra.
Following AQFT, the tensor of morphisms $R_{1}\in (\rho_{1},\sigma_{1})$, $R_{2}\in (\rho_{2},\sigma_{2})$
are defined by 
\begin{align}\label{mordef}
R_1\otimes R_2:=R_1{\rho}_{1}(R_2)\in \lmk\rho_{1}\otimes \rho_{2}, \sigma_{1}\otimes \sigma_{2}\rmk.
\end{align}
In fact, each intertwiner belongs to $\btv$ that ${\rho}_{1}(R_2)$
is well-defined.
Using the gap inequality and the non-triviality of the sector theory, we can show
for any cone $\Lambda$ that $\pi_{0}\lmk\caA_{\Lambda}\rmk''$ is either type $II_\infty$
or type $III$ factor. It means that 
there are isometries $u_{\Lambda}, v_{\Lambda}\in \pi_{0}\lmk\caA_{\Lambda}\rmk''$
such that $u_{\Lambda} u_{\Lambda}^{*}+v_{\Lambda} v_{\Lambda}^{*}=\unit$.
Using this, for any superselection sectors $\rho,\sigma$, we can define their direct sum
$\rho\bigoplus\sigma : {\caA}_{\bbZ^{2}}\to \caB(\caH_{0})$ by 
\begin{align}
\lmk \rho\bigoplus\sigma\rmk(A):=u_{\Lambda}\rho(A) u_{\Lambda}^{*}
+v_{\Lambda}\sigma(A) v_{\Lambda}^{*},\quad A\in{\caA}_{\bbZ^{2}} .
\end{align}
From the same fact, we can also define
subobjects.
Namely, if $p\in (\rho,\rho)$ is a non-zero projection, 
we can find some super selection sector $\sigma$
and an isometry $v$ such that $vv^{*}=p$ and $\rho(A)v=v\sigma(A)$
for all $A\in \caA_{\bbZ^{2}}$.
Hence we obtain the following theorem.
\begin{thm}\cite{MTCo}
In the above setting, superselection sectors of $\pi_{0}$
 form a braided $C^*$-tensor category.
 If two of such states $\omega_{\Phi_1},\omega_{\Phi_2}$
 satisfy $\omega_{\Phi_1}\sim_{\mathrm l.u.} \omega_{\Phi_2}$,
 then  corresponding braided $C^*$-tensor categories
 are monoidally equivalent.
\end{thm}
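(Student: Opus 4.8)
The plan is to split the statement into two parts: first, to assemble the braided $C^*$-tensor category structure out of the data already produced above, and second, to promote a local-unitary equivalence $\omega_{\Phi_1}\sim_{\mathrm l.u.}\omega_{\Phi_2}$ to a monoidal equivalence of the two resulting categories. For the first part, most of the ingredients are in place: the objects are the superselection sectors of $\pi_0$, extended to endomorphisms of $\btv$; the tensor of objects is their composition; the morphism spaces $(\rho,\sigma)$ are Banach spaces with $(\rho,\rho)$ a $C^*$-algebra, closed under the adjoint; the tensor of morphisms is (\ref{mordef}); and direct sums and subobjects exist because each $\pi_0(\caA_\Lambda)''$ is a type $II_\infty$ or type $III$ factor. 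What remains for the tensor-categorical axioms is organizational: composition of endomorphisms of $\btv$ is strictly associative and the identity sector $\id$ (with GNS data $\pi_0$) is a tensor unit, so one may take the associator and unit constraints to be identities, making the category strict; one then checks the interchange law for (\ref{mordef}) and that $R_1\otimes R_2$ lands in the asserted intertwiner space, which is immediate once $\rho_1(R_2)$ is well defined, as it is since each intertwiner lies in $\btv$.

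The substantive piece of the first part is the braiding. Following the Doplicher--Haag--Roberts and Buchholz--Fredenhagen calculus as in \cite{BF}, for sectors $\rho,\sigma$ I would choose a unitary charge transporter moving $\sigma$ to a cone disjoint from, and lying to one side of, the reference cone of $\rho$; since sectors localized in disjoint cones commute, the transporters assemble into a unitary $c_{\rho,\sigma}\in\lmk\rho\otimes\sigma,\sigma\otimes\rho\rmk$. The two topologically inequivalent choices of side in $\bbZ^2$ produce two such unitaries $c^{\pm}_{\rho,\sigma}$, each of which I would verify to be natural in both arguments and to satisfy the two hexagon identities by the usual intertwiner manipulations. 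The hard part here is that only the approximate Haag duality of Definition \ref{assum7} is available, so the commutation of disjointly localized sectors and the intertwining property of the transporters hold only up to the decay functions $f_{\varphi,\varepsilon,\delta}(t)$; the point is to push the auxiliary localization regions to spatial infinity and use that $\lmk\rho\otimes\sigma,\sigma\otimes\rho\rmk$ is norm closed, so that the approximate statistics operators converge to a genuine (exact) braiding intertwiner.

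For the second part I may assume, after relabelling, that $\omega_{\Phi_2}=\omega_{\Phi_1}\circ\alpha$ for some $\alpha\in\QAut(\caA_{\bbZ^2})$, and I identify the two GNS spaces with a common $\caH$ so that $\pi_0':=\pi_0\circ\alpha$ is the reference representation for $\omega_{\Phi_2}$. By Proposition \ref{staah} both $\pi_0$ and $\pi_0'$ satisfy approximate Haag duality, so both braided categories $\caC_1,\caC_2$ are defined. I would define a functor $F:\caC_1\to\caC_2$ by $F(\rho):=\rho\circ\alpha$ on objects and $F(R):=R$ on morphisms. Theorem \ref{thm:invariant} guarantees $F(\rho)$ is a sector for $\pi_0'$, and since $\alpha(\caA_{\bbZ^2})=\caA_{\bbZ^2}$, the equivalence $R\rho(A)=\sigma(A)R\Leftrightarrow R\lmk\rho\circ\alpha\rmk(A)=\lmk\sigma\circ\alpha\rmk(A)R$ shows $F$ is a $*$-functor which is bijective on each morphism space, hence fully faithful; applying the same construction to $\alpha^{-1}\in\QAut(\caA_{\bbZ^2})$ produces an inverse functor, so $F$ is even an isomorphism onto $\caC_2$ and essential surjectivity is automatic.

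It remains to make $F$ monoidal, and this is the main obstacle of the second part. The tensor products in $\caC_1$ and $\caC_2$ are built from the extensions to $\btv$ relative to $\pi_0$ and to $\pi_0'$ respectively, which are genuinely different procedures, so $F$ will only be monoidal up to natural unitaries $J_{\rho,\sigma}\in\lmk F(\rho)\otimes F(\sigma),\,F(\rho\otimes\sigma)\rmk$. I would construct $J_{\rho,\sigma}$ from the factorization property (\ref{tfac}) of $\alpha$: because $\alpha$ splits across any cone up to a unitary supported near the boundary, it intertwines the two extension procedures modulo such a localized unitary, and this unitary supplies the tensorator. The verifications that $J$ is natural, satisfies the associativity coherence, and carries $c^{\pm}$ on $\caC_1$ to $c^{\pm}$ on $\caC_2$ all reduce to the fact that the braiding is assembled from geometric charge transport, which $\alpha$ preserves by its locality; controlling these tensorators against the localization and decay estimates is where the real work lies, whereas full faithfulness and essential surjectivity come essentially for free from Theorem \ref{thm:invariant} and the invertibility of $\alpha$.
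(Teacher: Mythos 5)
Your proposal is correct and follows essentially the same route as the paper: the same categorical data (sectors extended to $\btv$ via approximate Haag duality, composition as tensor, intertwiners as morphisms with (\ref{mordef}), direct sums and subobjects from the type $II_\infty$/$III$ factors), a Buchholz--Fredenhagen style braiding obtained as a norm limit of charge-transporter products under the approximate duality, and, for the second claim, the functor $\rho\mapsto\rho\circ\alpha$, $R\mapsto R$ justified by Theorem \ref{thm:invariant} and Proposition \ref{staah} and made monoidal by tensorators extracted from the factorization property of $\alpha$. This is precisely the construction the paper sketches before the theorem statement and defers in detail to \cite{MTCo}.
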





\end{document}